\newcommand{\red}[1]{{#1}}
\newcommand{\onenorm}[1]{ \| #1 \|_1}
\newcommand{\define}{:=}
\newcommand{\pVCG}{p^{\mathit {VCG}}}
\def\RR{\mathbb R}
\def\EE{\mathbb E}
\def\cF{\mathcal F}
\def\cO{\mathcal O}
\def\bzero{\mathbf 0}
\newcommand{\hide}[1]{}
\newcommand{\raf}[1]{(\ref{#1})}
\newcommand{\cV}{\ensuremath{\mathcal{V}}}
\newcommand{\abs}[1]{\ensuremath{\left|#1\right|}}
\newcommand{\R}{\ensuremath{\mathbb R}}
\newcommand{\Z}{\ensuremath{\mathbb Z}}
\newcommand{\argmin}{\ensuremath{\mathrm{argmin}}}
\newcommand{\set}[2]{\{ #1 : #2 \}}     
\def\eps{\varepsilon}
\newcommand{\hx}{\ensuremath{\hat x}}
\newcommand{\bone}{\ensuremath{\boldsymbol{1}}}
\newcommand{\Q}{\mathcal{Q}}
\newcommand{\QI}{{\mathcal Q}_{\mathcal I}}
\newcommand{\ceil}[1]{\ensuremath{\lceil #1 \rceil}}
\newcommand{\argmax}{\operatorname{argmax}}
\newcommand{\bx}{\bar{x}}
\newcommand{\wx}{\hat{x}}
\newcommand{\Khaled}[1]{{#1}}
\title{Towards More Practical Linear Programming-based Techniques for Algorithmic Mechanism Design\thanks{A preliminary version of these results was presented at SAGT 2015~\cite{EMR:MechanismDesignSAGT}.}}
\author{
Khaled Elbassioni\inst{1} 
\and
Kurt Mehlhorn \inst{2} 
\and
Fahimeh Ramezani \inst{3} 
}
\institute{Masdar Institute of Science and Technology,
Abu Dhabi, UAE\and Max Planck Institute for Informatics,
Campus E1 4, 66123, Saarbrucken, Germany \and Department of Mathematics,
University of Isfahan,
Isfahan 81746-73441,
Iran
\tt{kelbassioni@masdar.ac.ae,
mehlhorn@mpi-inf.mpg.de, f.ramezani@sci.ui.ac.ir}
}
\begin{document}
\date{}
\maketitle

\begin{abstract}
R.~Lavi and C.~Swamy (FOCS 2005, J.~ACM 2011) introduced a general method for obtaining truthful-in-expectation mechanisms from linear programming based approximation algorithms. Due to the use of the Ellipsoid method, a direct implementation of the method is unlikely to be efficient in practice. We propose to use the much simpler and usually faster multiplicative weights update method instead. The simplification 
comes at the cost of slightly weaker approximation and truthfulness guarantees.
\end{abstract}

\section{Introduction}\label{intro}

\emph{Algorithmic mechanism design} is the art of designing and implementing the rules of a game to achieve a desired outcome from a set of possible outcomes. Each player (agent) has a valuation that assigns a value to each possible outcome. The desired outcome is the one that maximizes the sum of the valuations; this sum is usually called \emph{social welfare}. The players are assumed to be selfish: they report valuations to the mechanism, which may differ from the true valuations. Players may lie about their valuations in order to direct the mechanism into an outcome favorable to them. The mechanism computes an outcome and payments for the players.
The \emph{utility of a player} is her/his value of the outcome computed by the mechanism minus her/his payment charged by the mechanism. The agents are interested in optimizing their personal utility. Social welfare and personal utilities are determined with respect to the true valuations of the players, although they are not public knowledge. The purpose of the payments is to incentivize the players to report their true valuations. 
A mechanism is \emph{truthful} if reporting the truth is a best strategy for each player irrespective of the inputs provided by the other players. A mechanism is \emph{efficient} if the outcome and the payments can be computed in polynomial time. The \emph{underlying optimization problem} is the computation of an outcome maximizing social welfare given the valutions of the players.

If the underlying optimization problem can be efficiently solved to optimality, the celebrated VCG mechanism (see, e.g., \cite{NRTV07}) achieves
truthfulness, social welfare optimization, and polynomial running time. The computation of the outcome and the computation of the payments requires to solve the underlying optimization problem to optimality. 

Many optimization problems are NP-hard and hence are unlikely to have an exact algorithm with polynomial running time. However, it might be possible to solve the problem approximately in polynomial running time. 
 
An example is the combinatorial auction problem. There is a set of $m$ items to be sold to a set of $n$ players. The (reported) value of a set $S$ of items to the $i$-th player is $v_i(S)$ with $v_i(\emptyset) = 0$ and $v_i(S) \le v_i(T)$ whenever $S \subseteq T$. Let $x_{i,S}$ be a 0-1 variable indicating that set $S$ is given to player $i$. Then $\sum_{S} x_{i,S} \le 1$ for every player $i$ as at most one set can be given to $i$, and $\sum_{i} \sum_{S; j \in S} x_{i,S} \le 1$ for every item $j$ as any item can be given away only once. The social welfare is $\sum_{i,S} v_i(S) x_{i,S}$. The linear programming relaxation is obtained by replacing the integrality constraints for $x_{i,s}$ by $0 \le x_{i,S} \le 1$. Note that the number $d$ of variables is exponential in the number of items, namely $d = n 2^m$. The linear program is of the packing type, i.e., if $x$ is feasible and $y \le x$, then $y$ is feasible. For the combinatorial auction problem, $O(\sqrt{n})$-approximation algorithms exist and these algorithms also provide the corresponding integrality-gap-verifier (the definition is given below) with $\alpha = 1/\sqrt{n}$ (\cite{Briest-Krysta-Voecking,Kolliopoulos-Stein,Raghavan98}). 

For many integer linear programming problems, approximation algorithms are known that first solve the corresponding linear programming relaxation and then construct an integral solution either by rounding or by primal-dual methods. Lavi and Swamy (\cite{LS05,LS11}) showed that certain linear programming based approximation algorithms for the social welfare problem
can be turned  into randomized mechanisms that are truthful-in-expectation, i.e., reporting the truth maximizes  the expected utility of a player. The LS-mechanism is powerful (see~\cite{LS05,LS11,CEF10,HKV11} for applications), but unlikely to be efficient in practice because of its use of the Ellipsoid method. 
\emph{We show how to use the multiplicative weights update method instead. This results in simpler algorithms at the cost of somewhat weaker approximation and truthfulness guarantees.}

We next review the LS-mechanism. It applies to integer linear programming problems of the packing type for which the linear programming relaxation can be solved exactly and for which an $\alpha$-integrality gap verifier is available. More precisely:
\begin{enumerate}
\item Let $\mathcal{Q} \subseteq \R_{\ge 0}^d$ be a \emph{packing polytope}, i.e., $\mathcal{Q}$ is a bounded convex polytope contained in the non-negative orthant of $d$-dimensional space with the property that if $y \in \mathcal{Q}$ and $x \le y$ then $x \in \mathcal{Q}$. The linear programming problem for $\mathcal{Q}$ asks to find for a given $d$-dimensional vector $v$ a point $x^* = \argmax_{x \in \Q} v^T x$. 
\item We use $\mathcal{Q}_{\mathcal{I}} \define \mathcal{Q} \cap \Z^d$ for the set of integral points in $\mathcal{Q}$. The integer linear programming problem for $\mathcal{Q}_{\mathcal{I}}$ asks to find for a given $d$-dimensional vector $v$ a point $x^* = \argmax_{x \in \QI} v^T x$. We use $x^1$, $x^2$, \ldots, $x^j$, \ldots\ to denote the elements of $\QI$ and $\mathcal{N}$ for the index set of all elements in $\mathcal{Q}_{\mathcal{I}}$. 
\item An \emph{$\alpha$-integrality-gap-verifier} for $\mathcal{Q}_{\mathcal{I}}$ for some $\alpha \in (0,1]$ is an efficient algorithm that on input  $v \in \R^d$ and $x^* \in \mathcal{Q}$, returns an $x \in \mathcal{Q}_{\mathcal{I}}$ such that 
\[     v^T x \ge \alpha v^T x^*.\]
\end{enumerate}

The mechanism consists of three main steps:
\begin{enumerate}
\item  Let $v_i \in \R_{\ge 0}^d, 1\le i\le n,$ be the reported valuation of the $i$-th player and let $v = \sum_i v_i$ be the accumulated reported valuation. Solve the LP-relaxation, i.e., find a maximizer $x^* = \argmax_{x \in Q} v^T x$ for the social welfare of the fractional problem, and determine the VCG prices\footnote{$p_i= \sum_{j \not= i} v_j^T (\hx - x^*)$, where $\hx = \argmax_{x \in Q} \sum_{j \not= i} v_j^T x$.}  $p_1,\ldots,p_n$. The allocation $x^*$ and the VCG-prices are a truthful mechanism for the fractional problem. \smallskip
\item Write $\alpha\cdot x^*$ as a \emph{convex combination of integral solutions} in $\mathcal{Q}$, i.e., $\alpha \cdot x^* = \sum_{j \in \mathcal{N}} \lambda_j x^j$, $\lambda_j \ge 0$, $\sum_{j \in \mathcal{N}} \lambda_j = 1$, and $x^j \in \mathcal{Q}_{\mathcal{I}}$. This step requires the $\alpha$-integrality-gap-verifier. \smallskip
\item Pick the integral solution $x^j$ with probability $\lambda_j$, and charge the $i$-th player the price $p_i \cdot (v_i^T x^j/ v_{i}^T x^*)$. If $v_i^T x^* = 0$, charge zero. 
\end{enumerate}
The LS-mechanism approximates social welfare with factor $\alpha$ (is \emph{$\alpha$-socially efficient}) and guarantees {truthfulness-in-expectation}, i.e., it converts a truthful fractional mechanism into an $\alpha$-approximate truthful-in-expectation integral mechanism. 
With respect to practical applicability, steps 1 and 2 are the two major bottlenecks. Step 1 requires solving $n+1$ linear programs, one for the fractional solution and one for each price; an exact solution requires the use of the Ellipsoid method (see e.g. \cite{GLS88}), if the dimension is exponential. 
Furthermore, up to recently, the only method known to perform the decomposition in Step 2 is through the Ellipsoid method. An alternative method avoiding the use of the Ellipsoid method was recently given by Kraft, Fadaei, and Bichler~\cite{Kraft-Fadaei-Bichler}. We comment on their result in the next section.

\subsection{Our Results}
Our result concerns the design and analysis of a practical algorithm  for the LS-scheme. We first consider the case where the LP-relaxation of SWM (social welfare maximization)  in Step 1 of the LS-scheme  can be solved exactly and efficiently  and then our problem reduces to the design  of  a practical algorithm for Step 2.
Afterwards,  we consider the more general problem where only an FPTAS for the LP-relaxation is available.

\paragraph{Convex Decomposition.}
Over the past 15 years, simple and fast methods have been developed for solving packing and covering linear programs~ \cite{BI06,GK95,GargK07,K04,KY07,PST91,Y01} within an arbitrarily small error guarantee $\varepsilon$. These methods are based on the multiplicative weights update (MWU) method~\cite{AHK}, in which a very simple update rule is repeatedly performed until a near-optimal solution is obtained. We show how to replace the use of the Ellipsoid method in Step 2 by an approximation algorithm for covering linear programs. This result is the topic of Section~\ref{Fast Decomposition}. 

\begin{theorem}\label{t-mw} Let $\varepsilon > 0$ be arbitrary. 
Given a fractional point $x^* \in \mathcal{Q}$, and an $\alpha$-integrality-gap verifier for $\mathcal{Q}_{\mathcal{I}}$, we can find 
a convex decomposition 
\[   \frac{\alpha}{1+4\varepsilon}\cdot x^*=\sum_{j\in\mathcal{N}}\lambda_jx^j. \]
The convex decomposition has size (= number of nonzero $\lambda_j$) at most $s(1 + \ceil{\varepsilon^{-2} \ln s})$, where $s$ is the size of the support of $x^*$ (= number of nonzero components). The algorithm
makes at most $s \ceil{\varepsilon^{-2} \ln s}$ calls to the integrality-gap-verifier. 
\end{theorem}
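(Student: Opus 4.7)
\emph{Formulate the decomposition as a covering LP.} Let $S=\{i:x^*_i>0\}$ and $s=|S|$. I would encode the desired decomposition as the covering linear program
\[
(\mathrm{C})\qquad \min \sum_{j\in\mathcal{N}}\lambda_j \quad\text{s.t.}\quad \sum_{j\in\mathcal{N}}\lambda_j\,\frac{x^j_i}{\alpha x^*_i}\ge 1\ \ (i\in S),\quad \lambda\ge 0,
\]
with one column per $j\in\mathcal{N}$ and one row per $i\in S$; rows off $S$ are vacuous since $x^*_i=0$ and $x^j_i\ge 0$. Any $\lambda$ feasible for (C) with $\sum_j\lambda_j\le 1+4\eps$ satisfies $\sum_j\lambda_j x^j\ge\alpha x^*$ componentwise; renormalizing to a probability vector $\bar\lambda$ produces $\sum_j\bar\lambda_j x^j\ge\frac{\alpha}{1+4\eps}x^*$. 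Because $\mathcal{Q}_{\mathcal{I}}$ is downward closed under coordinatewise $\le$ (the packing property together with $0\in\mathcal{Q}$ and integrality), one may then trim each $x^j$ coordinatewise to replace $\ge$ by $=$, producing the exact identity required by the theorem without enlarging the support.

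\emph{The integrality-gap verifier is the MWU oracle for (C).} The LP dual is $\max\sum_{i\in S}y_i$ subject to $y\ge 0$ and $\sum_{i\in S}y_i x^j_i/(\alpha x^*_i)\le 1$ for every $j\in\mathcal{N}$. Given any candidate $y$, define $v_i=y_i/(\alpha x^*_i)$ on $S$ (and $v_i=0$ elsewhere) and invoke the $\alpha$-integrality-gap verifier on $(v,x^*)$; it returns $x^j\in\mathcal{Q}_{\mathcal{I}}$ with $v^T x^j\ge\alpha v^T x^*=\sum_{i\in S}y_i$, so either this $x^j$ exhibits a violated dual constraint or the dual value is $\le 1$. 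Hence the optimum of (C) is at most $1$, and the very same routine is the column-generating oracle that the multiplicative weights update (MWU) method expects: when MWU presents a row distribution $w$ and asks for a column with $\sum_{i\in S}w_i x^j_i/(\alpha x^*_i)\ge 1$, we supply it by querying the verifier with $v=w/(\alpha x^*)$.

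\emph{Solve (C) by MWU and read off the bounds.} Plug the verifier into a standard MWU-based covering-LP solver (Plotkin--Shmoys--Tardos / Young / Garg--K\"onemann style). Maintain exponential weights $w\in\Delta_S$; at each step call the verifier with $v_i=w_i/(\alpha x^*_i)$, multiplicatively increase the coefficient $\lambda_j$ of the returned column, and downweight the rows it covers. The standard KL/log-sum potential argument gives a feasible $\lambda$ for (C) with $\sum_j\lambda_j\le 1+\eps$. Organizing the execution as $\ceil{\eps^{-2}\ln s}$ outer phases, each sweeping through the $s$ rows once, produces the advertised bound of $s\ceil{\eps^{-2}\ln s}$ verifier calls and at most one new column per call; together with an initial seed of $s$ columns used to bootstrap the weights, this yields the support bound $s(1+\ceil{\eps^{-2}\ln s})$.

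\emph{Main obstacle.} The principal technical concern is the \emph{width} of (C): the coefficients $x^j_i/(\alpha x^*_i)$ admit no a priori upper bound, and a textbook MWU analysis would produce an iteration count scaling with this width. The phase structure above -- or equivalently a width-independent covering solver -- removes this dependence and is what introduces the factor $s$ in the iteration count. Once width is controlled, the $(1+\eps)$ multiplicative slack of MWU combines with the renormalization from $\sum_j\lambda_j\le 1+\eps$ into a probability vector via $(1+\eps)^2\le 1+4\eps$ for small $\eps$ to yield the target factor $\alpha/(1+4\eps)$, and the downward-closedness of $\mathcal{Q}_{\mathcal{I}}$ converts the resulting coordinatewise inequality into the equality claimed by the theorem.
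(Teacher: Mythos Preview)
Your overall architecture matches the paper's: formulate a covering LP with one row per coordinate in the support of $x^*$, observe that the $\alpha$-integrality-gap verifier supplies exactly the column oracle needed by a width-independent MWU solver (Khandekar's, in the paper), obtain a feasible $\lambda$ with $\sum_j\lambda_j\le 1+4\varepsilon$, and normalize to a convex combination dominating $\frac{\alpha}{1+4\varepsilon}x^*$. Up to this point your plan is essentially the paper's Theorem~\ref{Dominating Combination}.

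The gap is in your last step. You write that, by downward-closedness of $\mathcal{Q}_{\mathcal{I}}$, ``one may then trim each $x^j$ coordinatewise to replace $\ge$ by $=$, producing the exact identity \ldots\ without enlarging the support.'' This is false in general. The points $x^j$ are \emph{integral}, so trimming $x^j$ to some $\tilde{x}^j\le x^j$ keeps $\tilde{x}^j$ integral; the contribution of a fixed $\lambda_j$ in coordinate $i$ can therefore only be decreased in integer multiples of $\lambda_j$. There is no reason the excesses $\Delta_i=\sum_j\lambda_j x^j_i-\frac{\alpha}{1+4\varepsilon}x^*_i$ are expressible this way simultaneously for all $i$. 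A one-line example: a single vector $x^1=(1,1)$ with $\lambda_1=1$ dominates $(0.3,0.7)$, but no integral $\tilde{x}^1\le(1,1)$ satisfies $1\cdot\tilde{x}^1=(0.3,0.7)$.

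The paper handles this with a separate, nontrivial step (Theorem~\ref{Dominating to Exact} and Algorithm~\ref{dominating to exact decomposition}): when a coordinate cannot be fixed by decrementing an existing $x^i$, one \emph{splits} the mass $\lambda_i$ between $x^i$ and a new vector $y\le x^i$ obtained by zeroing the still-slack coordinates. Each such split kills at least one positive $\Delta_j$, so at most $s$ new vectors are introduced. That is where the additive $s$ in the bound $s(1+\lceil\varepsilon^{-2}\ln s\rceil)$ comes from --- not from any ``initial seed of $s$ columns to bootstrap the weights,'' which does not occur in the algorithm. Your accounting for the support size therefore needs to be replaced by this dominating-to-exact argument.
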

Kraft, Fadaei, and Bichler~\cite{Kraft-Fadaei-Bichler} obtained a related result independently. However, their construction is less efficient in two aspects. First, it requires $O(s^2 \varepsilon^{-2})$ calls of the integrality-gap-verifyer. Second, the size of their convex decomposition might be as large as $O(s^3 \varepsilon^{-2})$. In the combinatorial auction problem, $s=n+m$.
Theorem~\ref{t-mw} together with Steps 1 and 3 of the LS scheme implies a mechanism that is truthful-in-expectation and has $(\alpha/(1 + 4\varepsilon))$-social efficiency. 

We leave it as an open problem whether the quadratic dependency of the size of the decomposition on $\varepsilon$ can be improved\footnote{We remark that recent progress \cite{ZO15,WRM15} on solving LPs of the packing/covering type has resulted in an almost {\it linear} dependence of the running time on $\frac{1}{\eps}$. However, the current methods do not work in the oracle model and hence cannot be directly applied in our setting.}.   

\paragraph{Approximately Truthful-in-Expectation Mechanism.}
We drop the assumption that the fractional SWM-problem can be solved optimally and assume instead that we have an FPTAS for it. We assume further that the problem is \emph{separable}, which means that the variables can be partitioned into disjoint groups, one for each player, such that the value of an allocation for a player depends only on the variables in his group, i.e, \[v_i(x)=v_i(x_i),\] where $x_i$ is  the 
set of variables associated\footnote{In the combinatorial auction problem, $x_i$ comprises all variables $x_{i,S}$. The value of an allocation $x$ for player $i$ is given by $\sum_S v_i(S) x_{i,S}$.} with player $i$. Formally,  any outcome $x \in\mathcal{Q}\subseteq \R^d$ can be written as $x = (x_1,\ldots,x_n)$ where $x_i \in \R^{d_i}$ and \mbox{$d = d_1 + \ldots + d_n$}. 
We further  assume that for each player $i\in[n]$, there is a \emph{dominating allocation}
 $u^i \in \mathcal{Q}$ that maximizes his value for every valuation $v_i$, i.e., 
\begin{equation}\label{def of ui}
  v_i(u^i) = \max_{z \in\mathcal{Q}}v_i(z),
\end{equation}
for every  $v_i\in \cV_i$, where $\cV_i$ denotes the possible
valuations of player $i$. For the case of a combinatorial auction, the allocation $u^i$ allocates all items to player $i$.  

\begin{theorem}\label{conversion} Let $\varepsilon_0 \in(0, 1/2]$. Define $\varepsilon=\Theta(\frac{\varepsilon_0^5}{n^4})$. Assuming that the fractional SWM-problem has an FPTAS, is separable, and has a dominant allocation for every player $i$, and that there is an $\alpha$-integrality gap verifier for $\QI$, there is a polynomial time
randomized integral mechanism with the following properties:
 \begin{compactenum}[\mbox{}\hspace{\parindent}(C1)]
\item No positive transfer, i.e., prices are nonnegative. \label{C1a}
\item Individually rational with probability $1 - \varepsilon_0$, i.e., i.e.,  the utility of any truth-telling player is non-negative with probability at least $1 - \varepsilon_0$.\label{C2a}
\item $(1 - \varepsilon_0)$-truthful-in-expectation, i.e., reporting the truth maximizes the expected utility of a player up to a factor $1 - \varepsilon_0$.\label{C3a}
\item $\gamma$-socially efficient, where $\gamma=\alpha(1 - \varepsilon)(1 - \varepsilon_0)/(1 +4 \varepsilon)$.\label{C4a} 
\end{compactenum}
\end{theorem}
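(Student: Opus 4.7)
My plan is to build the randomized mechanism in three layers on top of Theorem~\ref{t-mw} and the assumed FPTAS. First, I would run the FPTAS on the reported aggregate valuation $v=\sum_iv_i$ to obtain a fractional allocation $\tilde x^*\in\Q$ with $v^T\tilde x^*\ge(1-\varepsilon)\max_{x\in\Q}v^T x$, and compute approximate VCG payments $\tilde p_i$ by running the FPTAS again on the instance with player $i$'s valuation removed. To insulate each player against the FPTAS's lack of strict optimality, I would \emph{smooth} the fractional allocation by mixing in a bit of each dominant bundle:
\[
   y^*\define(1-\delta)\tilde x^*+\frac{\delta}{n}\sum_{i=1}^n u^i\ \in\ \Q,
\]
for $\delta=\Theta(\varepsilon_0)$; the point $y^*$ lies in $\Q$ by convexity and~(\ref{def of ui}). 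I would then apply Theorem~\ref{t-mw} to $y^*$ with parameter $\varepsilon$ to obtain a convex decomposition $\tfrac{\alpha}{1+4\varepsilon}y^*=\sum_j\lambda_jx^j$ into integral allocations, output $x^j$ with probability $\lambda_j$, and charge each player the Lavi--Swamy-style scaled price $\tilde p_i\cdot v_i^Tx^j/v_i^T\tilde x^*$, truncated from above by $v_i^Tx^j$ (and set to $0$ when $v_i^T\tilde x^*=0$).

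The three easy properties follow directly. Property~(C1) holds by the lower truncation at $0$. Property~(C4) follows from the FPTAS guarantee applied to $\tilde x^*$ inside the identity $\EE[v^Tx^j]=\tfrac{\alpha}{1+4\varepsilon}v^Ty^*\ge\tfrac{\alpha(1-\varepsilon)(1-\delta)}{1+4\varepsilon}\OPT$; choosing $\delta=\varepsilon_0$ gives the claimed $\gamma$. For~(C2), I would combine the upper truncation with a bound on the probability that the truncation is active: the smoothing inflates the denominator $v_i^T\tilde x^*$ by roughly $\tfrac{\delta}{n}v_i^Tu^i$, which forces the scaled price to exceed $v_i^Tx^j$ only on events of total probability $O(\varepsilon_0)$ by a Markov-style computation over the support of the $\lambda_j$.

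The main work is the approximate truthfulness condition~(C3). Had the FPTAS solved the LP exactly, the Lavi--Swamy identity would show that the expected utility of a truth-telling player under the scaled-price integral mechanism equals his utility under the fractional VCG mechanism run on $\tfrac{\alpha}{1+4\varepsilon}\tilde x^*$, which is exactly truthful. Rederiving that identity with FPTAS substitutes turns the equality into an inequality with three distinct slack terms, coming from (i) the $\varepsilon$-error in $\tilde x^*$, (ii) the $\varepsilon$-error in the $\tilde p_i$, and (iii) the $1/(1+4\varepsilon)$ factor from the decomposition. To absorb these slacks \emph{multiplicatively} rather than additively, I would use the dominance property~(\ref{def of ui}): the smoothed mechanism guarantees each truth-telling player $i$ an additional expected value of at least $\tfrac{\alpha\delta}{n(1+4\varepsilon)}\max_{z\in\Q}v_i^Tz$, which is independent of the report and serves as a lower bound on the truth-telling baseline. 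Matching the three slack terms against this baseline gives $\EE[u_i(v_i,v_{-i})]\ge(1-\varepsilon_0)\,\EE[u_i(v_i',v_{-i})]$ provided $\varepsilon$ is polynomially small in $\varepsilon_0/n$, and a careful accounting produces the stated $\varepsilon=\Theta(\varepsilon_0^5/n^4)$.

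The hard part will be exactly this last analysis: the Lavi--Swamy identity only holds on the nose for an exact LP optimum, and recovering it as a multiplicative approximation requires propagating the FPTAS errors carefully through both the allocation and the scaled prices while avoiding any additive comparison that would downgrade $(1-\varepsilon_0)$-truthfulness to something weaker. The fifth-power dependence of $\varepsilon$ on $\varepsilon_0$ is the footprint of this balancing, and it seems nontrivial to improve.
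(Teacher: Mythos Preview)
Your proposal takes a different route from the paper and, as written, has a genuine gap in the truthfulness argument~(C3). The slack you identify in step~(ii) --- the $\varepsilon$-error coming from the approximate VCG prices --- is of order $\beta_i=\varepsilon L_i$ with $L_i=\sum_{j\ne i}v_j(u^j)$; this is exactly the additive term $3\beta_i$ in inequality~\raf{ineq} of Lemma~\ref{beta-VCG}. Your plan is to absorb it multiplicatively by lower-bounding the truth-telling baseline by $\frac{\alpha\delta}{n(1+4\varepsilon)}\bar v_i(u^i)$ from the smoothing. But that requires $\varepsilon L_i\le C\,\varepsilon_0\cdot\frac{\delta}{n}\,\bar v_i(u^i)$, i.e.\ an upper bound on the ratio $L_i/\bar v_i(u^i)$. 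This ratio is \emph{not} bounded in terms of $n$ and $\varepsilon_0$: a ``small'' player may have $\bar v_i(u^i)$ negligible compared to the other players' aggregate maximum, and then no choice of $\varepsilon$ depending only on $n,\varepsilon_0$ makes the absorption go through. (A related symptom is the confusion in your (C2) sketch about whether the denominator of the scaled price is $v_i^T\tilde x^*$ or $v_i^Ty^*$; the smoothing of $y^*$ does not ``inflate the denominator $v_i^T\tilde x^*$'' as you write.)

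The paper handles precisely this obstruction through an ingredient your construction lacks: an \emph{active/inactive} classification of players, conditions~\raf{active1}--\raf{active2}. A player with $v_i(u^i)<\eta L_i$ (or whose utility is too small relative to $L_i$) is declared inactive and has her allocation and payment zeroed out in the FPTAS branch; for active players one then has $L_i\le v_i(u^i)/\eta$, so the $\varepsilon L_i$ slack becomes $(\varepsilon/\eta)\,v_i(u^i)$ and can be absorbed into the baseline. The paper's ``smoothing'' is also done at the level of the \emph{mechanism} rather than the fractional point: Algorithm~\ref{rel-TIE} runs the FPTAS mechanism $M_0$ with probability $q_0$ and otherwise gives a single random player her dominant allocation $u^j$, charging her $\eta'L_j$ if she is active; and the baseline payment rule~\raf{LS} subtracts $\beta_i$ to secure individual rationality of $M_0$ outright, rather than relying on a price truncation. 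The four-case analysis of~(C3) and the specific bound $\varepsilon=\Theta(\varepsilon_0^5/n^4)$ both come from the interplay of the active test with the layered parameters $q_0,\eta,\eta'$, and your deterministic convex smoothing does not reproduce this.
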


Our mechanism is based on  constructing a randomized fractional mechanism with properties (C1) to (C3) and being $(1 - \varepsilon) (1-\varepsilon_0)$-socially efficient and then 
converting the mechanism into an integral mechanism with the properties above. The conversion is simple. Let us assume that  $x$ is a fractional allocation obtained from the fractional mechanism. 
We apply our convex decomposition technique and Step 3 of the Lavi-Swamy mechanism to obtain an integral randomized mechanism that satisfies  (C1) to (C4).We show this result in Section \ref{ApproxTruth}.
   
Our fractional mechanism refines the one given in~\cite{DRV11}, where the dependency of $\varepsilon$ on $n$ and $\varepsilon_0$ is as  $\varepsilon =\Theta({\varepsilon_0}/{n^9})$.  A recent experimental study of our mechanism on Display Ad Auctions \cite{EJ15} shows the applicability of our techniques in practice.

We leave it as an open problem whether the dependency of $\varepsilon$ on $\varepsilon_0$ and $n$ can be improved.

\paragraph{On the Existence of an FPTAS for the Fractional SWM-Problem.} We close the survey of our results with a comment on the existence of an FPTAS for the fractional SWM-problem. Consider a packing linear program
\[ \max c^{T}x\quad\text{subject to}\quad  Ax\leq b,~~~ x\geq 0, \]
where $A\in\RR_{\ge 0}^{m\times \red{n}}$ is an $m\times \red{n}$ matrix with non-negative entries and $c\in\RR_{>0}^n,$ $b\in\RR_{> 0}^m$ are positive vectors. We may assume that each column of $A$ contains a non-zero entry as otherwise the problem is trivially unbounded.
For every $\kappa\ge1$ and weight vector
$z\in \mathbb{R}^m_{\ge 0}$, 
let $\cO_\kappa(z)$ denote a $\kappa$-approximation oracle that 
 returns a $j$ such that
\begin{equation*} \frac{1}{c_j}\sum_{i=1}^{m}\frac{z_ia_{ij}}{b_i} \le \kappa \cdot \min_{j'\in[n]}\frac{1}{c_{j'}}\sum_{i=1}^{m}\frac{z_ia_{ij'}}{b_i}.\end{equation*}
Garg and K\"{o}nemann~\cite{GargK07} presented an algorithm that uses the oracle $\cO_\kappa$ to construct an 
approximation with a factor arbitrarily close to $1/\kappa$. For $\kappa = 1$, their algorithm is an FPTAS. 

What is the approximation oracle in case of the combinatorial auction problem?  In this problem, we have one constraint for each player and one constraint for each item. Let $y_i \ge 0$ be the weight for agent $i$ and $z_j \ge 0$ be the weight for item $j$. Then oracle  $\cO_1(y,z)$ requires to find the pair 
\[  (i,S) \define \argmin_{(k,T)} \frac{1}{v_{k}(T)} \left(y_k + \sum_{j \in T} z_j\right).\]
In other words, for each $k$, one needs to find the set $T$ which minimizes $(y_k + \sum_{j \in T} z_j)/v_k(T)$. If $y_k$ is interpreted as a fixed cost incurred by agent $k$ and $z_j$ as the cost of item $j$, then $T$ is the set that minimizes the ratio of cost relative to value. For a simple-minded bidder who is interested in the items in a subset $T_0$ and no other item, i.e., $v_k(T) = v_k(T_0)$ if $T_0 \subseteq T$ and $v_k(T) = 0$, otherwise, $T_0$ is the minimizer. Another simple case is additive valuations, i.e., $v_k(T) = \sum_{j \in T} a_j^k$, where $a_j^k \ge 0$ is the value of item $j$ for agent $k$. In this situation, $\frac{1}{v_{k}(T)} \left(y_k + \sum_{j \in T} z_j\right) \le \beta$ for a set $T$ and a positive real $\beta$ if and only if $\sum_{j \in T} (\beta a_j^k - z_j) \ge y_k$ and hence the minimal $\beta$ for which such a set $T$ exists is readily determined by binary search on $\beta$.

\section{A Fast Algorithm for Convex Decompositions}\label{Fast Decomposition}


Let $x^* \in \mathcal{Q}$ be arbitrary. Carr and Vempala~\cite{CV02} showed how to construct a convex combination of points in $\mathcal{Q}_{\mathcal{I}}$ dominating $\alpha x^*$ using a polynomial number of calls to an $\alpha$-integrality-gap-verifier for $\mathcal{Q}_{\mathcal{I}}$. Lavi and Swamy~\cite{LS11} modified the construction to get an exact convex decomposition $\alpha x^* = \sum_{i \in \mathcal{N}}\lambda_i x^i$ for the case of packing linear programs. 
The construction uses the Ellipsoid method. We show an approximate version that replaces the use of the Ellipsoid method by the multiplicative weights update (MWU) method.
For any $\varepsilon > 0$, we show how to obtain a convex decomposition of $\alpha x^*/(1 + \varepsilon)$. Let $s$ be the number of non-zero components of $x^*$. The size of the decomposition and the number of calls to the $\alpha$-integrality gap verifier are $O(s \varepsilon^{-2} \ln s)$. 

This section is structured as follows. We first review Khandekar's FPTAS for covering linear programs (Subsection~\ref{s-coverLP}). We then use it and the $\alpha$-integrality gap verifier to construct, on input $x^* \in \mathcal{Q}$, a dominating convex combination for $\alpha x^*/(1 + 4 \varepsilon)$ (Subsection~\ref{dominating convex combination}). In Subsection~\ref{exact convex decomposition}, we show how to convert a dominating convex combination into an exact convex decomposition. Finally, in Subsection~\ref{fast convex decomposition}, we put the pieces together. 

\subsection{Khandekar's Algorithm for Covering Linear Programs}
 \label{s-coverLP}
Consider a covering linear program:
\begin{align}
\label{cover}
\min c^{T}x \quad\text{subject to}\qquad Ax\geq b,~~ x\ge 0,
\end{align}
where $A\in\RR_{\ge 0}^{m\times n}$ is an $m\times n$ matrix with non-negative entries and $c\in\RR_{\ge 0}^n$ and  $b\in\RR_{\ge 0}^m$ are non-negative vectors.
 We assume the availability of a $\kappa$-\emph{approximation oracle} for some $\kappa \in (0,1]$. 
\begin{description}
\item[$\cO_\kappa(z)$:] Given $z\in \mathbb{R}^m_{\ge 0}$, the oracle finds a column $j$ of $A$ that maximizes $\frac{1}{c_j}\sum_{i=1}^{m}\frac{z_ia_{ij}}{b_i}$ within a factor of $\kappa$: 
\begin{equation*} \frac{1}{c_j}\sum_{i=1}^{m}\frac{z_ia_{ij}}{b_i} \ge \kappa \cdot \max_{j'\in[n]}\frac{1}{c_{j'}}\sum_{i=1}^{m}\frac{z_ia_{ij'}}{b_i} \end{equation*}
\end{description}
For an exact oracle $\kappa=1$, Khandekar~\cite{K04} gave an algorithm  which computes a feasible solution $\hat{x}$ to \raf{cover} such that $c^T\hat{x}\leq (1+4\varepsilon)z^*$ where  $z^*$ is the value of an optimal solution.  The algorithm makes $O(m\varepsilon^{-2}\log m)$ calls to the oracle, where $m$ is the number of rows in $A$. There are algorithms predating Khandekar's work, see, for example,~\cite[Chapter 4]{Koenemann-MS}

\begin{theorem}[Generalization of Khandekar's algorithm to arbitrary $\kappa \le 1$]\label{t1} Let $\varepsilon\in(0,\frac{1}{2}]$ and let $z^*$ be the value of an optimum solution to \raf{cover}.
Procedure Covering$(\cO_\kappa)$ (see Algorithm~\ref{Cover-LP} in Appendix I) terminates in at most $m\ceil{\varepsilon^{-2} \ln m}$ iterations with a feasible solution $\hat x$ of \raf{cover} of at most $m\ceil{\varepsilon^{-2} \ln m}$ positive components. At termination, it holds that
\begin{equation}\label{bd11}
c^T\hat x\le\frac{(1+4\varepsilon)}{\kappa}z^*.
\end{equation}  
\end{theorem}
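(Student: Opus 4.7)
The plan is to adapt Khandekar's primal-dual MWU analysis from the case $\kappa=1$ to arbitrary $\kappa\in(0,1]$, tracking carefully where the approximation slack enters. The algorithm in Appendix~I maintains a primal vector $x$ (initially zero) and a weight vector $z\in\R^m_{\ge 0}$ over the constraints. At each iteration it queries $\cO_\kappa(z)$ for a ``best-ratio'' column $j$, increases $x_j$ by the amount needed to saturate the next unsatisfied constraint, and updates each $z_i$ multiplicatively by a factor involving $a_{ij}\Delta x_j/b_i$. I would first prove the combinatorial claim: each constraint progresses through a bounded sequence of ``levels'' of coverage before it is declared satisfied, and a standard amortization caps the contribution of each constraint at $\lceil\varepsilon^{-2}\ln m\rceil$ iterations. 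Summing over the $m$ constraints gives $m\lceil\varepsilon^{-2}\ln m\rceil$ iterations and, since each iteration turns at most one coordinate from zero to positive, the same bound on the support of $\hat x$.

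For the approximation guarantee I would introduce the potential $\Phi(z)=\sum_i z_i$ and track the per-iteration ratio between the primal cost increment $c_j\Delta x_j$ and the potential change. Differencing the multiplicative update yields
\[ \Phi(z')-\Phi(z) \;\approx\; \varepsilon\,\Delta x_j\sum_{i=1}^m \frac{z_i a_{ij}}{b_i}, \]
up to an $O(\varepsilon^2)$ remainder that accounts for $e^t\le 1+t+t^2$ when $t\in[0,1]$. The oracle guarantee
\[ \frac{1}{c_j}\sum_{i=1}^{m}\frac{z_ia_{ij}}{b_i} \;\ge\; \kappa\cdot\max_{j'}\frac{1}{c_{j'}}\sum_{i=1}^{m}\frac{z_ia_{ij'}}{b_i} \]
implies that the vector $y_i:=z_i/(b_i M)$, where $M$ denotes the maximum on the right-hand side, is dual-feasible, and by LP weak duality its objective $b^T y=\Phi(z)/M$ is at most $z^*$. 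Comparing the primal cost increment $c_j\Delta x_j$ against the potential change therefore yields a per-step inequality in which $\kappa$ appears as a divisor: the ratio of primal cost to potential growth is at most $1/(\kappa M)$, not $1/M$, because the column returned by the oracle realises the pricing ratio only up to a factor of $\kappa$. This is the sole point where the approximation slack of the oracle enters.

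Chaining the per-iteration inequalities to termination, using $\ln(1+\varepsilon)\ge\varepsilon-\varepsilon^2$ for $\varepsilon\in(0,1/2]$, and absorbing the $O(\varepsilon^2)$ remainder into the linear term produces the bound $c^T\hat x\le \frac{1+4\varepsilon}{\kappa}\,z^*$ in the advertised form. The main obstacle is not conceptual but a matter of bookkeeping: one must verify that Khandekar's original analysis uses the oracle exclusively through the pricing ratio above, so that a multiplicative slack of $\kappa$ propagates cleanly through the telescoping argument without affecting either the iteration count or the support size. A secondary check is that the initialization of $z$ and the stopping criterion depend only on the combinatorial updates, which is immediate from the description of the algorithm in Appendix~I.
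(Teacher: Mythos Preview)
Your high-level plan is sound and matches the paper's structure: $\kappa$ enters exactly once, at the step where the chosen column's pricing ratio is compared to the LP optimum, and it propagates as a single divisor into \raf{bd11}. Your weak-duality argument for that step (build a dual-feasible $y_i=z_i/(b_iM)$ and use $b^Ty\le z^*$) is a legitimate alternative to the paper's route, which instead plugs in a primal optimum $x^*$: from $B^{-1}Ax^*\ge\bone$ one gets $p(t)^TB^{-1}Ax^*\ge\|p(t)\|_1$, and then the oracle guarantee gives $z^*\cdot p(t)^TB^{-1}A\bone_{j(t)}\ge\kappa\,c_{j(t)}\|p(t)\|_1$. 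The two arguments are equivalent and both isolate $\kappa$ cleanly.

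Where your write-up does not match Algorithm~\ref{Cover-LP} is in the potential bookkeeping. The weights there are $p_i(t)=(1-\varepsilon)^{A_ix(t-1)/b_i}$, so the potential $\|p(t)\|_1$ \emph{decreases}; the one-step bound uses $(1-\varepsilon)^x\le 1-\varepsilon x$ for $x\in[0,1]$ followed by $1-x\le e^{-x}$, and the final clean-up uses $\varepsilon^{-1}\ln\frac{1}{1-\varepsilon}\le 1+2\varepsilon$. The inequalities you quote ($e^t\le 1+t+t^2$ and $\ln(1+\varepsilon)\ge\varepsilon-\varepsilon^2$) belong to an increasing-weight variant and would not telescope against this algorithm. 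The paper's telescoping runs from $\|p(0)\|_1=m$ down to $(1-\varepsilon)^{A_{i_0}x(t_0)/b_{i_0}}$ for any $i_0$ still active at termination, and taking logarithms yields the key inequality into which the claim above is substituted. The iteration bound is also more direct than an amortized ``levels'' argument: the step size is $\delta(t)=\min_{i\in L(t),\,a_{ij(t)}\ne0}b_i/a_{ij(t)}$, so the minimising constraint's coverage $A_ix/b_i$ jumps by exactly one, and a constraint leaves $L(t)$ once its coverage reaches $T=\varepsilon^{-2}\ln m$; summing over constraints gives $m\lceil T\rceil$ immediately. Your support-size observation (one coordinate incremented per iteration) is correct as stated.
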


For completeness, we give a proof of Khandekar's result in Appendix I.
The proof of Theorem \ref{t1} can be modified to give (see Appendix I):

\begin{corollary}\label{c1}
Suppose $b=\bone$, $c=\bone$, and we use the following oracle
$\cO'$ instead of $\cO$ in Algorithm \ref{Cover-LP}:

\begin{description}
\item[$\cO'(z)$:] Given $z\in \mathbb{R}^m_{\ge 0}$, such that $\bone^Tz=1$, the oracle finds a column $j$ of $A$ such that $z^TA \bone_j\ge 1$. 
\end{description}
Then the algorithm terminates in at most $m\ceil{\varepsilon^{-2} \ln m}$ iterations with a feasible solution $\hat x$ having at most $m\ceil{\varepsilon^{-2} \ln m}$ positive components, such that $\bone^T\hat x\le 1+4\varepsilon$. 
\end{corollary}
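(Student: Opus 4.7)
The plan is to adapt the proof of Theorem~\ref{t1} to this restricted setting, where the only substantive change is in how the oracle's guarantee is used. With $b=\bone$ and $c=\bone$, the quantity $\frac{1}{c_j}\sum_i \frac{z_i a_{ij}}{b_i}$ that appears throughout Khandekar's analysis collapses to $z^T A \bone_j$; and the target shifts from bounding $c^T\hat x \le (1+4\varepsilon)z^*/\kappa$ to bounding $\bone^T \hat x \le 1+4\varepsilon$. So I would simply run the same algorithm, with $\cO_\kappa$ replaced by $\cO'$, and revisit each step of the Theorem~\ref{t1} analysis.

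First I would check that Algorithm~\ref{Cover-LP} can indeed be run with $\cO'$ in place of $\cO_\kappa$: in every iteration the weight vector maintained by the algorithm can be normalized to a probability vector $z^{(t)}$ with $\bone^T z^{(t)}=1$, which is precisely the input $\cO'$ expects. The substitution is therefore syntactically valid and produces, in each iteration $t$, some column $j_t$ satisfying $z^{(t)T} A \bone_{j_t}\ge 1$.

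Next I would revisit the single place in the proof of Theorem~\ref{t1} where the oracle guarantee enters quantitatively: the step that lower-bounds $\sum_i z_i^{(t)} a_{i,j_t}/b_i$ by $\kappa/z^*$, derived via the LP-duality reformulation $\max_{j'}(A^T z)_{j'}\ge 1/z^*$ applied to a probability vector $z$. Under $\cO'$ this lower bound is replaced directly by $\sum_i z_i^{(t)} a_{i,j_t}\ge 1$, which is formally identical to the $\kappa=1$, $z^*=1$ instance of the Theorem~\ref{t1} bound. Plugging this into the same potential-function / width argument then yields termination within $m\lceil \varepsilon^{-2}\ln m\rceil$ iterations, feasibility of $\hat x$, and $\bone^T \hat x \le 1+4\varepsilon$. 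The support-size bound is automatic, since the algorithm introduces at most one new nonzero coordinate of $\hat x$ per iteration.

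The only delicate point is confirming that the single inequality $z^T A \bone_j\ge 1$ furnished by $\cO'$ is strong enough, at every iteration, to play the role that the richer oracle bound $\kappa/z^*$ played in Theorem~\ref{t1}. This turns out to be automatic here: the normalization $b=\bone$, $c=\bone$ effectively sets the scale to $z^*=1$ in the analysis, and once that identification is made no further modification of the original proof is required, so the corollary follows with the same bounds.
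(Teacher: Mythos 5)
Your proposal is correct and takes essentially the same route as the paper: inequality \raf{e11} in the proof of Theorem~\ref{t1} is oracle-independent, and the guarantee $z_t^TA\bone_{j(t)}\ge 1$ of $\cO'$ substitutes directly for Claim~\ref{cl11} (the only place where $\kappa$ and $z^*$ entered), giving $\sum_{t}\delta(t)\,z_t^TA\bone_{j(t)}\ge \bone^Tx(t_0)$ and hence $\bone^T\hat x\le 1+4\varepsilon$, while the iteration and support bounds are unaffected by the oracle change. One phrasing nit: it is not that $b=\bone$, $c=\bone$ forces $z^*=1$ (the covering LP's optimum need not equal $1$); rather, the new oracle's guarantee bypasses $x^*$ and $z^*$ altogether, which is precisely why the conclusion reads $1+4\varepsilon$ with no $z^*$ factor.
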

\subsection{Finding a Dominating Convex Combination}\label{dominating convex combination}
Recall that we use $\mathcal{N}$ to index the elements in $\mathcal{Q}_{\mathcal{I}}$. We assume the availability of an  $\alpha$-integrality-gap-verifier $\cF$ for $\mathcal{Q}_{\mathcal{I}}$. We will use the results of the preceding section and show how to obtain for any $x^*\in \mathcal{Q}$ and any positive $\varepsilon$ a convex composition of points in $\mathcal{Q}_{\mathcal{I}}$ that covers $\alpha x^*/(1 + 4 \varepsilon)$. Our algorithm requires $O(s \varepsilon^{-2} \ln s)$ calls  to the oracle, where $s$ is size of the support of $x^*$.
\begin{theorem}\label{Dominating Combination} Let $\varepsilon > 0$ be arbitrary. 
Given a fractional point $x^* \in \mathcal{Q}$ and an $\alpha$-integrality-gap verifier $\cF$ for $\mathcal{Q}_{\mathcal{I}}$, we can find a convex combination $\bar{x}$ of integral points in $\mathcal{Q}_{\mathcal{I}}$ such that 
\[         \frac{\alpha}{1+4\varepsilon}\cdot x^* \le \bar{x} = \sum_{i\in\mathcal{N}}\lambda_ix^i .\]
The convex decomposition has size at most $s \ceil{\varepsilon^{-2} \ln s}$, where $s$ is the number of positive entries of $x^*$.
The algorithm makes at most $s \ceil{\varepsilon^{-2} \ln s}$ calls to the integrality-gap verifier.
\end{theorem}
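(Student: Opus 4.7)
The plan is to reduce the decomposition problem to a covering LP with a $\cO'$-style oracle and then invoke Corollary~\ref{c1}. Let $S = \{k : x^*_k > 0\}$ be the support of $x^*$, so $|S| = s$. Consider the covering LP with one variable $\lambda_j$ for every $x^j \in \QI$ and one constraint for every $k \in S$:
\begin{equation*}
\min \sum_{j\in\mathcal{N}} \lambda_j \quad \text{s.t.} \quad \sum_{j\in\mathcal{N}} \lambda_j \frac{x^j_k}{\alpha x^*_k} \ge 1 \;\;(k\in S),\;\;\lambda_j \ge 0.
\end{equation*}
This fits the format of Corollary~\ref{c1} with $b=\bone$, $c=\bone$, and $a_{kj}=x^j_k/(\alpha x^*_k)$. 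The coordinates outside $S$ can be ignored because the required inequality $\sum_j \lambda_j x^j_k \ge \alpha x^*_k$ is trivially satisfied there.

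Next, I would implement the oracle $\cO'$ from Corollary~\ref{c1} using the $\alpha$-integrality-gap-verifier $\cF$. Given $z \in \R^S_{\ge 0}$ with $\bone^T z = 1$, set $w_k = z_k/x^*_k$ for $k \in S$ and $w_k = 0$ otherwise, and call $\cF$ on input $(w, x^*)$ to obtain $x^j \in \QI$. The guarantee of $\cF$ is $w^T x^j \ge \alpha\, w^T x^*$, and by construction $w^T x^* = \sum_{k\in S} (z_k/x^*_k)x^*_k = \bone^T z = 1$, so $w^T x^j \ge \alpha$. Expanding, this reads $\sum_{k\in S} z_k \, x^j_k/(\alpha x^*_k) \ge 1$, which is exactly the condition $z^T A\bone_j \ge 1$ required of $\cO'$.

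Applying Corollary~\ref{c1} to the covering LP with this oracle yields, after at most $s\lceil\varepsilon^{-2}\ln s\rceil$ iterations, a feasible $\hat\lambda \ge \bo$ with at most $s\lceil\varepsilon^{-2}\ln s\rceil$ nonzero components and $\sum_j \hat\lambda_j \le 1 + 4\varepsilon$. Feasibility gives $\sum_j \hat\lambda_j x^j_k \ge \alpha x^*_k$ for all $k \in S$ (and trivially for $k \notin S$), so $\sum_j \hat\lambda_j x^j \ge \alpha x^*$ coordinatewise. Setting $\lambda_j = \hat\lambda_j / (\sum_{j'}\hat\lambda_{j'})$ turns $\hat\lambda$ into a probability distribution, and the scaling factor is at most $1+4\varepsilon$, so $\bar x := \sum_j \lambda_j x^j \ge \frac{\alpha}{1+4\varepsilon}\, x^*$, as required. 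The support size and oracle-call bounds follow directly from Corollary~\ref{c1}.

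The main obstacle I expect is precisely the step of showing that the verifier $\cF$ implements $\cO'$: the definition of an $\alpha$-integrality-gap-verifier only guarantees a multiplicative comparison against the particular fractional point $x^*$ supplied to it (not against $\max_{x\in\Q} w^T x$ or against $\max_{j} w^T x^j$), so it would \emph{not} suffice as the stronger maximization oracle $\cO$. The reason the reduction still works is that the covering LP has been normalized so that $b=\bone$ and its dual has $w^T x^* = \bone^T z = 1$ built in, converting the verifier's guarantee directly into the violation-style condition that $\cO'$ demands. This is exactly why Corollary~\ref{c1} (rather than the maximization version Theorem~\ref{t1}) is the right tool here.
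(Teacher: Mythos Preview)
Your proposal is correct and follows essentially the same approach as the paper: formulate the decomposition as a covering LP, implement the oracle $\cO'$ via the $\alpha$-integrality-gap verifier using the weights $z_k/x^*_k$, invoke Corollary~\ref{c1}, and normalize. The only difference is that the paper adds the extra constraint $\sum_i \lambda_i \ge 1$ to the covering LP (so that the normalization factor $\Lambda$ is guaranteed to lie in $[1,1+4\varepsilon]$), whereas you omit it and observe, correctly, that if $\sum_j \hat\lambda_j < 1$ the normalization only strengthens the inequality; your version thus has $m=s$ rows and matches the stated iteration bound exactly.
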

\begin{proof}
The task of finding the multipliers $\lambda_i$ is naturally formulated as a covering LP (\cite{CV02}), namely,
\begin{align}\label{ffs}
\min \quad & \sum_{i\in\mathcal{N}} \lambda_{i}\\
s.t.   &   \sum_{i\in\mathcal{N}} \lambda_{i}x^i_{j} \geq \alpha\cdot x_{j}^* \quad \text{for all $j$,}\nonumber\\
& \sum_{i\in\mathcal{N}} \lambda_i \geq 1,~~\lambda_{i} \geq 0.\nonumber\\
&\lambda_{i} \geq 0.\nonumber
\end{align}
Clearly, we can restrict our attention to the $j \in S^+ \define \set{j}{x^*_j > 0}$ and rewrite the constraint for $j \in S^+$ as $ \sum_{i\in\mathcal{N}} \lambda_{i}x^i_{j}/ (\alpha\cdot x_{j}^*) \ge 1$. For simplicity of notation, we assume $S^+ = [1 .. s]$. We thus have a covering linear program as in (\ref{cover}) with $m \define s+1$ constraints, $n \define \abs{\mathcal{N}}$ variables $\lambda_i$, right-hand side $b \define \bone$, cost vector $c \define \bone$, and constraint matrix $A = (a_{j,i})$ (note that we use $j$ for the row index and $i$ for the column index), where
  \[
 a_{j,i} \define \left\{
 \begin{array}{l l}
x^i_j/(\alpha x^*_j) &  \quad 1\leq j\leq s, i\in\mathcal{N}\\
1 & \quad j=s+1, i\in\mathcal{N}\\
 \end{array} \right.
 \]
 Thus we can apply Corollary~\ref{c1} of Section~\ref{s-coverLP}, provided we can efficiently implement the required oracle $\cO'$. We do so using ${\cF}$. 

Oracle $\cO'$ has is given a $\tilde{z})$ such that $1^T\tilde{z}=1$. Let us conveniently write $\tilde{z}=(w,z)$, where $w\in \mathbb{R}_{\ge 0}^{s}$, $z\in\RR_{\ge 0}$, and $\sum^{j=1}_{j = s}w_j+z=1$.  
Oracle $\cO'$ needs to find a column $i$ such that $\tilde{z}^TA\bone_i\geq 1$. In our case $\tilde{z}^TA\bone_i=\sum_{j=1}^s w_j x^i_j/\alpha x^*_j+z$, and we need to find a column $i$ for which this expression is at least one. Since $z$ does not depend on $i$, we concentrate on the first term. Define 
  \[
 V_j:= \left\{
 \begin{array}{l l}
\frac{w_j}{\alpha x^*_j}&  \quad \text{for $j\in S^+$}\\
0  & \quad \text{otherwise}.\\
 \end{array} \right.
 \]

\noindent
Call algorithm ${\cF}$ with $x^*\in \mathcal{Q}$ and $V:=(V_1,\ldots )$. ${\cF}$ returns an integer solution $x^i\in \mathcal{Q_I}$ such that
\[ \sum_{j\in S^+}\frac{w_j}{\alpha x_j^*}x^i_j = V^T x^i \ge \alpha \cdot V^Tx^*=\sum_{j\in S^+} w_j,\]
and hence, 
\[ \sum_{j\in S^+} \frac{w_j}{\alpha x_j^*}x^i_j+z \ge\sum_{j\in S^+}w_j+z=1.\]
Thus $i$ is the desired column of $A$.

It follows by Corollary \ref{c1} that Algorithm \ref{Cover-LP} finds a feasible solution $\lambda' \in\RR_{\ge 0}^{|\mathcal{N}|}$ to the covering LP \raf{ffs}, and a set $\mathcal{Q}_\mathcal{I}'\subseteq\mathcal{Q}_\mathcal{I}$ of vectors (returned by $\mathcal{F}$), such that $\lambda'_i>0$ only for $i\in\mathcal{N}'$, where $\mathcal{N}'$ is the index set returned by  oracle $\mathcal{O}'$ and  $|\mathcal{N}' |\le s \ceil{\varepsilon^{-2} \ln s}$. Also $ \Lambda \define \sum_{i\in\mathcal{N}'}\lambda'_i\le (1+4\varepsilon)$. Scaling $\lambda'_i$ by $\Lambda$, we obtain a set of 
 multipliers $\{\lambda_i=\lambda_i'/\Lambda:~ i\in\mathcal{N}'\}$, such that  $\sum_{i\in\mathcal{N}'}\lambda_i =1$ and
\begin{equation*}
\sum_{i\in\mathcal{N}'}\lambda_i x^i\geq \frac{\alpha}{1+4\varepsilon} x^*.
\end{equation*}
We may assume $x^i_j=0$ for all $j\notin S^+$ whenever $\lambda_i > 0$; otherwise simply replace $x^i$ by a vector in which all components not in $S^+$ are set to zero. By the packing property this is possible. 
\qed
\end{proof}
\subsection{From Dominating Convex Combination to Exact Convex Decomposition}\label{exact convex decomposition}

We will show how to turn a dominating convex combination into an exact decomposition. The construction is general and uses only the packing property. Such a construction seems to have been observed in~\cite{LS05}, but was not made  explicit. Kraft, Fadaei, and Bichler~\cite{Kraft-Fadaei-Bichler} describe an alternative construction. Their construction may increase the size of the convex decomposition (= number of non-zero $\lambda_i$) by a multiplicative factor $s$ and an additive factor $s^2$. In contrast, our construction increases the size only by an additive factor $s$.

\begin{algorithm}[t]
\caption{Changing a dominating convex decomposition into an exact decomposition}
\label{dominating to exact decomposition}
\begin{algorithmic}[1]
\REQUIRE A packing convex set $\mathcal{Q}$ and point $x^* \in \mathcal{Q}$ and a convex combination $\sum_{i \in \mathcal{N}} \lambda_i x^i$ of integral points in $\mathcal{Q}_\mathcal{I}$ dominating $x^*$. 
\ENSURE A convex decomposition $x^* = \sum_{i \in \mathcal{N}'} \lambda_i x^i$ with $x^i \in \mathcal{Q}_\mathcal{I}$. 
\WHILE{$\Delta_j \define  \sum_{i \in \mathcal{N}} \lambda_i x^i -x^*_j > 0$ for some $j$}
\STATE let $i$ be such that $\lambda_i x^i_j > 0$ and $\Delta_j > 0$ for some $j$. 
\IF{there is a $j$ such that $\lambda_i x^i_j > 0$ and $\sum_{h \in \mathcal{N}} \lambda_h x^h - \lambda_i \bone_j \ge x^*$}
\STATE replace $x^i$ by $x^i - \bone_j$. 
\ELSE \STATE Among the indices $j$ with $x^i_j > 0$ and $\Delta_j > 0$, let $k$ minimize $\Delta_k/x^i_k$. 
\STATE let $y$ be such that $y_j = x^i_j$, if $\Delta_j = 0$, and $y_j = 0$, if $\Delta_j > 0$. 
\STATE change the lefthand side of \raf{dominance} as follows: replace $\lambda_i$ by $\lambda_i - {\Delta_k}/{x^i_k}$ and increase the coefficient of $y$ by $\Delta_k/x^i_k$. 
\ENDIF
\ENDWHILE
\end{algorithmic}
\end{algorithm}

\begin{theorem}\label{Dominating to Exact} Let $x^* \in \mathcal{Q}$ be dominated by a convex combination $\sum_{i \in \mathcal{N}} \lambda_i x^i$ of integral points in $\mathcal{Q}_{\mathcal{I}}$, i.e.,
\begin{equation} \sum_{i \in \mathcal{N}} \lambda_i x^i \ge x^*. \label{dominance} \end{equation}
Then Algorithm~\ref{dominating to exact decomposition} achieves equality in \raf{dominance}. It increases the size of the convex combination by at most $s$, where $s$ is the number of positive components of $x^*$. 
\end{theorem}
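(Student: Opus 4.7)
The plan is to track the slack vector $\Delta = \sum_{i \in \mathcal{N}} \lambda_i x^i - x^*$ throughout the execution and argue that the algorithm drives it to zero coordinate by coordinate while preserving three invariants: $\sum_i \lambda_i$ stays constant, every updated $x^i$ remains in $\mathcal{Q}_\mathcal{I}$, and $\Delta \geq 0$. Preservation of membership in $\mathcal{Q}_\mathcal{I}$ is where the packing property enters: in the IF branch, $x^i - \bone_j$ is nonnegative because $x^i_j \geq 1$ (integrality) and lies in $\mathcal{Q}$ by downward closure; in the ELSE branch, the truncation $y$ satisfies $0 \leq y \leq x^i$ componentwise, so the same argument applies, and $y$ is integral since $x^i$ is.

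Next I would verify that each branch preserves $\Delta \geq 0$. The IF guard $\sum_h \lambda_h x^h - \lambda_i \bone_j \geq x^*$ is just $\Delta_j \geq \lambda_i$, and the update only decreases $\Delta_j$ by $\lambda_i$, leaving all other coordinates untouched; in particular, previously tight coordinates stay tight. The ELSE step is more delicate: a direct computation shows that after splitting $\lambda_i$ into $\lambda_i - \Delta_k/x^i_k$ on $x^i$ and $\Delta_k/x^i_k$ on $y$, the new slack at any $j$ with old $\Delta_j > 0$ equals $\Delta_j - (\Delta_k / x^i_k)\, x^i_j$, which is nonnegative precisely by the minimality of $\Delta_k/x^i_k$, and the new slack at any $j$ with old $\Delta_j = 0$ remains zero because the coefficient of $x^i_j$ across the two new terms sums back to $\lambda_i$. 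Positivity of $\lambda_i - \Delta_k/x^i_k$ follows from the ELSE guard, which forces $\Delta_\ell < \lambda_i$ for every candidate $\ell$.

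The main obstacle, and the heart of the size bound, is to prove that ELSE executes at most $s$ times. I would use the following key observation: whenever ELSE is entered, the selected coordinate $k$ must lie in $S^+ := \{j : x^*_j > 0\}$. Indeed, if $x^*_k = 0$ then $\Delta_k = \sum_h \lambda_h x^h_k \geq \lambda_i x^i_k \geq \lambda_i$ (using $x^i_k \geq 1$), so the IF guard would have fired instead, a contradiction. Because both branches preserve tightness of already-tight coordinates (from the slack analysis above), $k$ stays tight forever after, so at most $|S^+| = s$ coordinates can ever be converted from slack to tight by ELSE. Since each ELSE step adds at most one new term $y$ and each IF step adds none, the overall size of the combination grows by at most $s$.

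For termination, I would observe that ELSE occurs at most $s$ times, and between consecutive ELSE steps the integer potential $\sum_i \|x^i\|_1$ (restricted to the currently active $x^i$'s) strictly decreases with each IF iteration, which bounds the number of IF steps between any two ELSE steps. Hence the while-loop terminates, at which point every $\Delta_j$ equals zero, yielding $\sum_i \lambda_i x^i = x^*$; combined with the preserved identity $\sum_i \lambda_i = 1$, this is the required exact convex decomposition.
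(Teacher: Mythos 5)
Your proof is correct and follows essentially the same case analysis, invariants, and accounting as the paper's. Two modest additions beyond the paper's treatment: you avoid the up-front reduction to coordinates in $S^+$ by proving directly that the ELSE branch can only select $k$ with $x^*_k > 0$ (since $x^*_k = 0$ would force $\Delta_k \ge \lambda_i$ and trigger the IF branch), and you supply an explicit termination argument via the integer potential $\sum_i \bone^T x^i$, which the paper leaves implicit.
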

\begin{proof} Let $S^+ = \set{j}{x^*_j > 0}$. We may assume $x^i_j = 0$ for all $j \not\in S^+$ and all $i \in {\mathcal N}$ with $\lambda_i > 0$. 

For $j \in S^+$, let $\Delta_j =  \sum_{i\in\mathcal{N}}\lambda_i x^i_j - x^*_j $ be the gap in the $j$-th component. 
If $\Delta_j = 0$ for all $j\in S^+$, we are done. Otherwise, choose $j$ and $i \in \mathcal{N}$ such that $\Delta_{j} > 0$ and $\lambda_i x^i_{j} > 0$. 

Let $\bone_j$ be the $j$-th unit vector. If, for some $j$ with $x^i_j > 0$ and $\Delta_j > 0$, replacing $x^i$ by $x^i - \bone_j$ maintains feasibility,  i.e., satisfies constraint \raf{dominance}, we perform this replacement. 
Since $x^i$ is an integer vector in $\mathcal{Q}_{\mathcal{I}}$, the vector $x^i-\bone_j$ is nonnegative and, by the packing property, in $\mathcal{Q}_{\mathcal{I}}$.  The replacement decreases $\Delta_j$ by $\lambda_i$ and does not increase the number of nonzero $\lambda_i$. 

Otherwise, $\Delta_j < \lambda_i$ for all $j$ with $\Delta_j > 0$ and $x^i_j > 0$. Since $x^i$ is integral, we also have $\Delta_j \le \lambda_i x^i_j$ for all such $j$. Among the indices $j$ with $\Delta_j > 0$ and $x^i_j > 0$, let $k$ minimize $\Delta_k/x^i_k$. Let $y$ be such that $y_j = x^i_j$ if $\Delta_j = 0$ and $y_j = 0$ if $\Delta_j > 0$. Then $y \in \mathcal{Q}_{\mathcal I}$ since $\mathcal{Q}$ is a packing polytope. In the convex combination, replace
\[  \lambda_i x^i  \quad\text{by}\quad (\lambda_i - \frac{\Delta_k}{x^i_k})\cdot x^i + \frac{\Delta_k}{x^i_k} \cdot y.\]
Notice that $\lambda_i - \frac{\Delta_k}{x^i_k} \ge 0$. Let $\Delta_j'$ be the new gaps. Then clearly $\Delta_j' = \Delta_j$, if $\Delta_j = 0$. Consider any $j$ with $\Delta_j > 0$. Then
\[ \Delta'_j = \Delta_j - \frac{\Delta_k}{x^i_k} \cdot x^i_j = \begin{cases} 0  & \text{if $j = k$}\\
                                                                                                                   \ge  (\Delta_j - \frac{\Delta_j}{x^i_j})\cdot x^i_j = 0 &\text{if $j \not= k$.}\end{cases}\]
The inequality in the second case holds since $\Delta_k/x^i_k \le \Delta_j/x^i_j$. 
We have decreased the number of nonzero $\Delta_j$ by one at the cost one additional nonzero $\lambda_i$. 
Thus the total number of vectors added to the convex decomposition is at most $s$.
\qed
\end{proof}
\subsection{Fast Convex Decomposition}\label{fast convex decomposition}
We are now ready to prove Theorem~\ref{t-mw}.

\medskip
\noindent{\it Proof of Theorem~\ref{t-mw}}.
Theorem~\ref{Dominating Combination} yields a convex combination of integer points of $\mathcal{Q}_I$ dominating $\alpha x^*/(1 + 4 \varepsilon)$. The convex decomposition has size at most $s \ceil{\epsilon^{-2} \ln s }$, where $s$ is the number of positive entries
of $x^*$. The algorithm makes at most $s \ceil{\epsilon^{-2} \ln s }$ calls to the integrality-gap verifier.
Theorem~\ref{Dominating to Exact} turns this dominating convex combination into an exact combination. It adds up to $s$ additional vectors to the convex combination. \qed

\section{Approximatly Truthful-in-Expectation  Mechanisms}\label{ApproxTruth}

The goal of this section is to derive an approximate VCG-mechanism. We do not longer assume that the fractional SWM-problem can be solved exactly, but instead assume that we have an FPTAS for it. We will first design a randomized fractional algorithm (Theorem~\ref{fractional mechanism} in Subsection~\ref{Approximately Truthful Fractional Mechanism}) and then convert the fractional mechanism into an integral mechanism and prove Theorem~\ref{conversion} in Subsection~\ref{app:3}.

\subsection{Approximately  Truthful-in-Expectation Fractional Mechanisms}\label{Approximately Truthful Fractional Mechanism}

\begin{theorem}\label{fractional mechanism}\label{approximately truthful mechanism}
Let $\varepsilon_0 \in(0, 1/2]$. Define $\varepsilon=\Theta(\frac{\varepsilon_0^5}{n^4})$. Assuming that the fractional SWM-problem has an FPTAS, is separable, and has a dominant allocation for every player $i$, there is a polynomial time
randomized fractional mechanism (Algorithm~\ref{rel-TIE}) with the following properties:
 \begin{compactenum}[\mbox{}\hspace{\parindent}(D1)]
\item No positive transfer, i.e., prices are nonnegative. \label{D1}
\item Individually rational with probability $1 - \varepsilon_0$, i.e., the utility of any truth-telling player is non-negative with probability at least $1 - \varepsilon_0$.\label{D2}
\item $(1 - \varepsilon_0)$-truthful-in-expectation, i.e., reporting the truth maximizes the expected utility of a player up to a factor $1 - \varepsilon_0$.\label{D3}
\item $\gamma$-socially efficient, where $\gamma=(1 - \varepsilon)(1 - \varepsilon_0)$. \label{D4} 
\end{compactenum}
\end{theorem}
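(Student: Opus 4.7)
The plan is to construct the mechanism as a smoothed version of the given FPTAS, in the spirit of~\cite{DRV11} but tuned so that the approximation/truthfulness trade-off improves from $\varepsilon=\Theta(\varepsilon_0/n^9)$ to $\varepsilon=\Theta(\varepsilon_0^5/n^4)$. Let $A$ be the FPTAS for the fractional SWM problem and write $\tilde x(v)=A(v)$ for its output on a reported profile $v=(v_1,\ldots,v_n)$, so that $\sum_i v_i(\tilde x(v))\ge (1-\varepsilon)\max_{x\in\Q}\sum_i v_i(x)$. The central trick is to mix $\tilde x(v)$ with the per-player dominant allocations $u^i$: with probability $1-n\delta$ the mechanism returns $\tilde x(v)$, and with probability $\delta$ (for each $i=1,\ldots,n$) it returns $u^i$, where $\delta$ will eventually be fixed on the order of $\varepsilon_0^2/n$. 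Mixing in $u^i$ guarantees that player $i$'s expected value is at least $\delta\cdot\max_z v_i(z)$, which supplies the ``slope'' needed to charge incentive-compatible payments against an otherwise only approximately optimal allocation.

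For the payments I would use a VCG/Archer--Tardos-style formula computed through the FPTAS itself: roughly, player $i$ is charged $p_i=\max\{0,\,\widetilde{\mathrm{SW}}_{-i}(v_{-i})-\widetilde{\mathrm{SW}}_{-i}(v)\}$, where $\widetilde{\mathrm{SW}}_{-i}$ denotes the FPTAS-approximated social welfare of all players other than $i$. Taking the $\max$ with zero enforces (D\ref{D1}). For (D\ref{D3}) the argument is the standard Myerson/MIDR calculation adapted to the smoothed rule: the expected utility of player $i$ under a report $v'_i$ is $\mathbb{E}[v_i(X(v'_i,v_{-i}))]-p_i$, and combining (a) the $(1-\varepsilon)$-optimality of the FPTAS on the joint objective, (b) the $\Omega(\delta)$ responsiveness to $v_i$ induced by the mixing, and (c) separability, one shows that truthful reporting beats any misreport up to the multiplicative factor $1-\varepsilon_0$.

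Claim (D\ref{D4}) follows by decomposing the random output: with probability $1-n\delta$ we obtain at least $(1-\varepsilon)\OPT$, and on the remaining $n\delta$ mass social welfare is at worst $0$, giving a total of at least $(1-n\delta)(1-\varepsilon)\OPT\ge (1-\varepsilon_0)(1-\varepsilon)\OPT$ once $n\delta\le\varepsilon_0$. For individual rationality (D\ref{D2}) one shows that under truthful reporting the expected payment is a bounded fraction of the expected value, so that either a Markov-type bound, or a direct construction that charges only an $O(\varepsilon_0)$ fraction of the utility, gives $\Pr[u_i\ge 0]\ge 1-\varepsilon_0$.

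The main obstacle is the precise book-keeping that yields the stated $\varepsilon=\Theta(\varepsilon_0^5/n^4)$. Three error sources compete: (i) the FPTAS error $\varepsilon$, which enters both the allocation and the payments; (ii) the mixing mass $n\delta$, which must dominate $\varepsilon$ by a polynomial factor in order to convert approximate optimization into approximate truthfulness; and (iii) the slack $\varepsilon_0$, which must be split across (D\ref{D2}), (D\ref{D3}), and (D\ref{D4}). Balancing these yields coupled constraints of the form $\varepsilon\lesssim\delta^{k}\varepsilon_0^{\ell}/\poly(n)$ for small integers $k,\ell$; resolving the system with care---sharper than in~\cite{DRV11}, where a looser chain of inequalities was used---is precisely what produces the fifth-power dependence on $\varepsilon_0$ and the fourth-power dependence on $n$. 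I expect that avoiding the extra factors in~\cite{DRV11} will require a single joint analysis of truthfulness, IR, and social welfare rather than three separate union-type bounds.
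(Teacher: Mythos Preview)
Your high-level framework—mix the FPTAS output with the dominant allocations $u^i$ and use approximate-VCG payments—matches the paper, but two technical ingredients that drive the proof are missing from your sketch, and without them the argument does not close.

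First, the payment rule is not simply $\max\{0,\widetilde{\mathrm{SW}}_{-i}(v_{-i})-\widetilde{\mathrm{SW}}_{-i}(v)\}$; the paper subtracts an additive slack $\beta_i=\varepsilon L_i$ with $L_i=\sum_{j\ne i}v_j(u^j)$, setting $p_i=\max\{0,\pVCG_i(v)-\beta_i\}$. This shift is what makes the FPTAS branch individually rational \emph{deterministically} (Lemma~\ref{beta-VCG}). Consequently your IR argument is off: there is no Markov step. The only event in which a truth-teller can have negative utility is the $u^i$ branch, where player $i$ is charged a fixed amount $\eta' L_i$; that branch has probability $q_i\le\varepsilon_0$, which is exactly (D\ref{D2}).

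Second, and more fundamentally, the paper introduces an \emph{active/inactive} classification of players via two threshold conditions \raf{active1}--\raf{active2}. Inactive players have their allocation and payment zeroed out in the FPTAS branch and are not charged in the $u^i$ branch. The truthfulness argument is then a four-case analysis over whether $i$ is active under $\bar v$ versus under $v$, using the key inequality $U_i(\bar v)\ge U_i(v)-\varepsilon\bar v_i(x)-3\beta_i$ from Lemma~\ref{beta-VCG}. This thresholding, together with the $\eta' L_i$ charge, is precisely the device that produces the chain $q_j=(1-q_0)/n$, $\eta=\bar\varepsilon(1-q_0)^2/n^3$, $\eta'=\eta/q_j$, $\varepsilon=\eta\bar\varepsilon(1-q_0)/(8n)$ with $\bar\varepsilon=\varepsilon_0/2$, hence $\varepsilon=\Theta(\varepsilon_0^5/n^4)$. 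Your proposed $\delta\approx\varepsilon_0^2/n$ and the appeal to a ``Myerson/MIDR calculation'' do not substitute for this: the FPTAS is not maximal-in-range, so no off-the-shelf incentive identity applies, and the active-player machinery is exactly what fills that gap.
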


In order to present  Algorithm \ref{rel-TIE} and prove Theorem~\ref{fractional mechanism}, we introduce some notation and prove some preliminary Lemmas.
Let 
\begin{equation}\label{def of beta}      L_i \define  \sum_{j \ne i} v_j(u^j) \quad\text{and}\quad  \beta_i \define \varepsilon L_i. \end{equation} 
Note that $L_i$ does not depend on the valuation of player $i$. Let $\mathcal{A}$ be an $\varepsilon$-approximation algorithm  for the LP relaxation of SWM. Note that $\mathcal{A}$ is polynomial time since the running time of an FPTAS is polynomial in $\frac{1}{\varepsilon}$. We use $\mathcal{A}(v)$ to denote the outcome of $\mathcal{A}$ on input $v$; $\mathcal{A}(v)$ is a fractional allocation in $Q$. In the following, we will apply $\mathcal{A}$ to different valuations which we denote by $v = (v_i,v_{-i})$, $\bar{v} =  (\bar{v}_i, v_{-i})$, and $v' = (\bzero, v_{-i})$. Here $v_i$ is the reported valuation of player $i$, $\bar{v}_i$ is his true valuation and $v'_i=\bzero$. We denote the allocation returned by $\mathcal{A}$ on input $v$ (resp., $\bar{v}$, $v'$) by $x$ (resp., $\bx$, $x'$). Note that $x$, $\bx$, $x'$ are  fractional allocations.

We first bound the maximal change in social welfare induced by a change of the valuation of the $i$-th player. 
%
\begin{lemma}\label{easy} Let $\varepsilon \ge 0$ and let $\mathcal{A}$ be an $\varepsilon$-approximation algorithm which returns allocation $x$ on input vector $v$.  Let $\wx\in \mathcal{Q}$ be an arbitrary point, then 
\begin{equation} v(x) \ge v(\wx)-\beta_i -\varepsilon\cdot v_i(\wx)\label{cond}\end{equation}
for every $i$. 
\end{lemma}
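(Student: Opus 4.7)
The plan is to derive the inequality directly from the $\varepsilon$-approximation guarantee of $\mathcal{A}$ together with the dominating-allocation property used in the definition of $\beta_i$. No induction or intricate combinatorics is required; the proof is essentially a two-line computation.

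First I would fix any $\wx \in \mathcal{Q}$ and let $y^\star \in \argmax_{y \in \mathcal{Q}} v(y)$ be the fractional optimum. Since $\mathcal{A}$ is an $\varepsilon$-approximation for the LP-relaxation of SWM, its output $x$ satisfies
\[
   v(x) \;\ge\; (1-\varepsilon)\, v(y^\star) \;\ge\; (1-\varepsilon)\, v(\wx),
\]
where the second inequality uses $\wx \in \mathcal{Q}$. Rearranging, $v(x) \ge v(\wx) - \varepsilon\, v(\wx)$, so it only remains to show that $\varepsilon\, v(\wx) \le \varepsilon\, v_i(\wx) + \beta_i$.

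The key step is to split the social welfare into the $i$-th player's contribution and the rest:
\[
   \varepsilon\, v(\wx) \;=\; \varepsilon\, v_i(\wx) \;+\; \varepsilon \sum_{j\neq i} v_j(\wx).
\]
Here I invoke the dominating-allocation assumption~\eqref{def of ui}: for every $j \neq i$, $v_j(\wx) \le v_j(u^j)$, so $\sum_{j\neq i} v_j(\wx) \le L_i$ by the definition of $L_i$ in~\eqref{def of beta}. Multiplying by $\varepsilon$ yields $\varepsilon\sum_{j\neq i} v_j(\wx) \le \varepsilon L_i = \beta_i$. Plugging this into the display above and combining with the earlier bound $v(x) \ge v(\wx) - \varepsilon\, v(\wx)$ gives~\eqref{cond}.

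Since the argument only needs the LP-approximation guarantee applied to $\wx$ and an upper bound on the valuations of the remaining players via their dominating allocations, there is no real obstacle; the only subtlety is recognizing that singling out player $i$ is exactly what allows the $\varepsilon$-loss on the other players to be absorbed into the additive term $\beta_i$, which is independent of $v_i$ — a property that will be crucial for the truthfulness arguments later.
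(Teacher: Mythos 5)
Your proof is correct and follows exactly the paper's argument: apply the $\varepsilon$-approximation guarantee to bound $v(x)\ge(1-\varepsilon)v(\wx)$, split $\varepsilon v(\wx)$ into player $i$'s term and the rest, and absorb the latter into $\beta_i$ via the dominating allocations. No differences worth noting.
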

\begin{proof} We have 
\begin{align*}
v(x)&\ge(1-\varepsilon)\max_{x \in \mathcal{Q}} v(x) \\
&\ge 
(1-\varepsilon)v(\wx)\nonumber \\ 
&=v(\wx) - \varepsilon\cdot \sum_{j\neq i}v_j(\wx)-\varepsilon\cdot v_i(\wx)\nonumber \\ 
&\ge v(\wx)-\beta_i-\varepsilon\cdot v_i(\wx),
\end{align*}
where the first inequality follows from the fact that $\mathcal{A}$ is an $\varepsilon$-approximation algorithm, and the last inequality follows from $\varepsilon\sum_{j\neq i}v_j(\wx) \leq\varepsilon\sum_{j\neq i} v_j(u^j)=\beta_i$.
\qed
\end{proof}
We use the following payment rule: 
\begin{align}\label{LS}
p_i(v):= \max\{p_i^{\mathit{VCG}}(v)-\beta_i,0\} 
\end{align}
{where} \[p_i^{\mathit{VCG}}(v):=v_{-i}(x')-v_{-i}(x).\] $v_{-i}(x)=\sum_{j\neq i}v_{j}(x), x ={\mathcal{A}}(v)$ and $x' ={\mathcal{A}}(0,v_{-i})$.  Observe the similarity in the definition of $p_i^{\mathit{VCG}}(v)$ to the VCG payment rule. In both cases, the payment is defined as the difference of the total value of two allocations to the 
players different from $i$. The first allocation ignores the influence of player $i$ ($x' ={\mathcal{A}}(0,v_{-i})$) and the second allocation takes it into account ($x ={\mathcal{A}}(v)$). The difference to the VCG rule is that $x'$ and $x$ are not true maximizers but are computed by an $\varepsilon$-approximation algorithm.

\begin{algorithm}[t]
	\caption{The mechanism $M$ of Theorem~\ref{approximately truthful mechanism}.
		The vectors $u^i$ are defined as in \raf{def of ui} and the quantities $L_i$ are defined in \raf{def of beta}. 
		The definitions of $q_0, q_j,$ active and inactive player  are given in the proof of Theorem \ref{approximately truthful mechanism}.}
	\label{rel-TIE}
	\begin{algorithmic}[1]
		\REQUIRE A valuation vector $v$, a packing convex set $\mathcal{Q}$  and an $\varepsilon$-approximation algorithm, where $\varepsilon$ is as Theorem \ref{approximately truthful mechanism}.

		\ENSURE An allocation $x\in\mathcal{Q}$ and a payment $p\in\RR^n$ satisfying (D1) to (D4).

		\STATE Choose an index $j\in \{0,1,\ldots,n\}$, where $0$ is chosen with probability $q_0$ and $j \in \{1,\ldots,n\}$ is chosen with probability $q_j= (1 - q_0)/n$. 
		
		\IF{$j=0$}
		\STATE Use $\varepsilon$-approximation algorithm $\mathcal{A}$ to compute an allocation $x = (x_1,\ldots ,x_n)\in\mathcal{Q}$ 
		and compute payments with payment rule \raf{LS}. For all inactive $i$, change $x_i$ and $p_i$ to zero.
		\ELSE
		\STATE For  every $1\leq i\leq n$, set
		\[
		\left\{
		\begin{array}{l l}
		x_i=u^i, p_i= \eta' L_i& \quad \text{if $i=j$ and $i$ is active, }\\
		x_i=u^i,  p_i=0 &\quad \text{if $i=j$ and $i$ is inactive,}\\
		x_i=0,  p_i=0 &\quad \text{if $i\neq j$.}\\
		\end{array} \right.
		\]
		\ENDIF
		
		\RETURN $(x,p)$
	\end{algorithmic}
\end{algorithm}

Let $U_i(v)=\bar{v}_i(x)-p_i(v)$ be the utility of player $i$ for bid vector $v$.  Note that the value of the allocation $x=\mathcal{A}(v)$ is evaluated with the true valuation $\bar{v}_i$ of player $i$. Let $U_i(\bar{v}) = \bar{v}_i(\bx)-p_i(\bar{v})$ be the utility of player $i$ for valuation vector $\bar{v}=(\bar{v}_i, v_{-i})$. 

 \begin{lemma}\label{beta-VCG} Let $\varepsilon \ge 0$ and let $\mathcal{A}$ be an $\varepsilon$-approximation algorithm.  Let $M_0$ be the mechanism with allocation function $\mathcal{A}(v)$
and the payment rule \raf{LS}.  $M_0$   is an individually rational mechanism with no positive transfer, such that for all $i$, 
\begin{equation}\label{ineq}
U_i(\bar{v})\ge U_i(v)-\varepsilon\cdot\bar v_i(x)-3\beta_i.
\end{equation}
\end{lemma}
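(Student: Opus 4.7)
The plan is to verify (NPT), (IR), and the inequality \raf{ineq} in turn. No-positive-transfer is immediate from the definition $p_i(v)=\max\{p_i^{\mathit{VCG}}(v)-\beta_i,0\}\ge 0$.

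For individual rationality I would separate two cases. If $p_i(\bar v)=0$ then $U_i(\bar v)=\bar v_i(\bar x)\ge 0$ since valuations are nonnegative. Otherwise $p_i(\bar v)=v_{-i}(x')-v_{-i}(\bar x)-\beta_i$, so $U_i(\bar v)=\bar v(\bar x)-v_{-i}(x')+\beta_i$. Applying Lemma~\ref{easy} to the input $\bar v=(\bar v_i,v_{-i})$ with test point $\hat x:=x'$ gives $\bar v(\bar x)\ge \bar v(x')-\beta_i-\varepsilon\bar v_i(x')$, which rearranges to $U_i(\bar v)\ge(1-\varepsilon)\bar v_i(x')\ge 0$.

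For the main inequality, I expect the critical preliminary step to be a near-monotonicity bound on $p_i^{\mathit{VCG}}(\bar v)$: apply Lemma~\ref{easy} to the input $v'=(\bzero,v_{-i})$ with test point $\hat x:=\bar x$; since $v'_i=\bzero$ the error term $\varepsilon v'_i(\bar x)$ vanishes and one gets $v_{-i}(x')\ge v_{-i}(\bar x)-\beta_i$, i.e., $p_i^{\mathit{VCG}}(\bar v)\ge -\beta_i$. This in turn implies $p_i(\bar v)\le p_i^{\mathit{VCG}}(\bar v)+\beta_i$, regardless of whether the $\max\{\cdot,0\}$ truncation is active. Combined with the trivial lower bound $p_i(v)\ge p_i^{\mathit{VCG}}(v)-\beta_i$, the common $v_{-i}(x')$ terms cancel, yielding
\[ p_i(v)-p_i(\bar v)\ge v_{-i}(\bar x)-v_{-i}(x)-2\beta_i. \]
Plugging this into $U_i(\bar v)-U_i(v)=\bar v_i(\bar x)-\bar v_i(x)+p_i(v)-p_i(\bar v)$ and regrouping gives $U_i(\bar v)-U_i(v)\ge \bar v(\bar x)-\bar v(x)-2\beta_i$.

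A final application of Lemma~\ref{easy}, this time to input $\bar v$ with test point $\hat x:=x$, yields $\bar v(\bar x)\ge \bar v(x)-\beta_i-\varepsilon \bar v_i(x)$, and combining with the previous display delivers precisely $U_i(\bar v)\ge U_i(v)-\varepsilon \bar v_i(x)-3\beta_i$. The only delicate point in the whole argument is the upper bound on $p_i(\bar v)$: the truncation $\max\{\cdot,0\}$ is what forces the separate appeal to Lemma~\ref{easy} for the valuation $(\bzero,v_{-i})$, and the constant $3$ on the right-hand side decomposes transparently as one $\beta_i$ from each of the two payment-comparison bounds plus one more from the $\varepsilon$-approximation gap on the welfare.
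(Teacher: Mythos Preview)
Your proof is correct and follows essentially the same approach as the paper: the same three applications of Lemma~\ref{easy} (on $\bar v$ with test point $x'$ for IR, on $v'=(\bzero,v_{-i})$ to get $p_i^{\mathit{VCG}}\ge-\beta_i$, and on $\bar v$ with test point $x$ for the final welfare comparison), yielding the same decomposition of the constant $3$. The only cosmetic difference is that you bound the payment difference $p_i(v)-p_i(\bar v)$ first and then plug into $U_i(\bar v)-U_i(v)$, whereas the paper first establishes $U_i(\bar v)\ge \bar v_i(\bar x)-p_i^{\mathit{VCG}}(\bar v)-\beta_i$ via a case split and then expands; the underlying algebra is identical.
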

\begin{proof}
By definition, $p_i(v)\ge 0$ for all $v$ and all $x$; so the mechanism has no positive transfer. 
We next address individual rationality.  Assume $p_i(\bar v)=\pVCG_i(\bar{v})-\beta_i>0$, as otherwise $U_i(\bar v)\ge 0$.  We have 
\begin{align*}
U_i(\bar{v})&=\bar{v}_i(\bx)-p_i(\bar{v})  \\
&=\bar{v}_i(\bx)- \pVCG_i(\bar{v})+\beta_i\\
&=\bar{v}_i(\bx)  + \bar{v}_{-i}(\bx) - \bar{v}_{-i}(x') +\beta_i \\
&=\bar{v}(\bx)   - \bar{v}(x') + \bar{v}_i(x') +\beta_i \\
&\ge(1-\varepsilon)\bar{v}_{i}(x')\ge 0,
\end{align*}
where the first inequality follows from Lemma \ref{easy} with $v = \bar{v}$ and $\wx = x'$. \smallskip

Finally, we prove \raf{ineq}. We have 
$v'(x')=v_{-i}(x')$, $v'(x)=v_{-i}(x)$, and $v'_{i}(x)=0$. Thus, 
\begin{align*}
\pVCG_i(v) & =v_{-i}(x')-v_{-i}(x)=v'(x')-v'(x)+\varepsilon\cdot v'_i(x)  \nonumber
\end{align*}
Applying Lemma  \ref{easy} for $v = v'$ and $\wx = x$, we obtain
\begin{align*}
v'(x')-v'(x)+\varepsilon\cdot v'_i(x) \ge - \beta_i
\end{align*}
Therefore, 
\begin{align}\label{ass}
\pVCG_i(v)+\beta_i\ge 0. 
\end{align}
 To see \raf{ineq}, we consider two cases:\smallskip

 \noindent\emph{Case 1:} $p_i(v)=0$. Then using \raf{ass}
 \begin{align*}  
 U_i(\bar{v}) = \bar{v}_i(\bx)-0\ge \bar{v}_i(\bx)-\pVCG_i(\bar{v})-\beta_i.
  \end{align*}

\noindent\emph{Case 2:} $p_i(v) = \pVCG_i(v)-\beta_i$. 
 \begin{align*}  
 U_i(\bar{v}) = \bar{v}_i(\bx)-p_i(\bar{v})=\bar{v}_i(\bx)-\pVCG_i(\bar{v})+\beta_i\ge \bar{v}_i(\bx)-\pVCG_i(\bar{v})-\beta_i,
  \end{align*}
  where the last inequality follows from $\beta_i\ge 0$. Therefore, in both cases we have:
 \[U_i(\bar{v}) \ge\bar{v}_i(\bx)-\pVCG_i(\bar{v})-\beta_i.\]
Now by using the definition of $\pVCG_i$ and Lemma \ref{easy}, we get  
\begin{align*}
U_i(\bar{v})& \ge\bar{v}_i(\bx)-\pVCG_i(\bar{v})-\beta_i\\
&=\bar{v}_i(\bx)+\bar{v}_{-i}(\bx)-\bar{v}_{-i}(x')-\beta_i\\
&=\bar{v}(\bx)-\bar{v}_{-i}(x')-\beta_i\\
&\ge\bar{v}(x)-\beta_i-\varepsilon\bar{v}_i(x)-\bar{v}_{-i}(x')-\beta_i\\
&=\bar{v}_i(x)-\pVCG_i(v)-\varepsilon \bar{v}_i(x)-2\beta_i\\
 &\ge \bar{v}_i(x)-p_i(v)-\beta_i-\varepsilon \bar{v}_i(x)-2\beta_i\\
 &=U_i(v)-\varepsilon \bar{v}_i(x)-3\beta_i.
\end{align*}
\qed
\end{proof}
 In what follows we  prove Theorem  \ref{approximately truthful mechanism}.

\medskip
\noindent{\it Proof of Theorem \ref{approximately truthful mechanism}}.
Define  
$q_0 = (1 - \frac{\varepsilon_0}{n})^n$, $\bar{\varepsilon} = \varepsilon_0/2$, and $q_j = (1 - q_0)/n$ for $1 \le j \le n$. Let  $\eta= \bar{\varepsilon}(1 - q_0)^2/n^3$, $\eta' = \eta/q_j$, and 
$\varepsilon = \eta \bar{\varepsilon} (1 - q_0)/(8 n)$.  Then using\footnote{Let $f(x)=(1-{x}/{n})^n - 1+x$. Then $f'(x) = (1 - x/n)^{n-1} +1 \ge 0$. Hence, for $0 \le x \le 1$ and $n \ge 1$, the function is increasing and $f(x)\ge f(0)=0$.}  $q_0= (1 -\frac{\varepsilon_0}{n})^n\ge 1-\varepsilon_0$ and $q_0= (1 -\frac{\varepsilon_0}{n})^n\le 1-\varepsilon_0/2$, we get
\[ \frac{ \varepsilon_0^5}{128 n^4}=\frac{ \bar{\varepsilon}^2 (\varepsilon_0/2)^3}{8 n^4}\le\varepsilon = \eta \bar{\varepsilon} (1 - q_0)/(8 n)=\frac{ \bar{\varepsilon}^2 (1 - q_0)^3}{8 n^4} \le \frac{ \bar{\varepsilon}^2 \varepsilon_0^3}{8 n^4}=\frac{\varepsilon_0^5}{16 n^4},\]
as stated in the Theorem. 
Let $U_i(v) = \bar{v}_i(x)-p_i(v)$  be the utility of player $i$ obtained by the mechanism $M_0$ of Lemma \ref{beta-VCG}. \Khaled{Let further $\widehat U_i(v) = v_i(x)-p_i(v)$}. Following \cite{DRV11}, we call player $i$ {\it {active}} if the following two conditions hold:
\begin{align}\label{active1}
\Khaled{\widehat U_{i}(v)}+\frac{\bar{\varepsilon} q_i}{q_0}v_i(u^i)&\ge\frac{q_i}{q_0} \eta'  L_i, \\
v_i(u^i)&\ge \eta L_i. \label{active2}
\end{align}
\Khaled{Note that these conditions do not depend on the true valuation $\bar v$.}
We denote by $T = T(v)$ the set of active players when the valuation is $v = (v_1,\ldots,v_n)$. Note that $L_i$ does not depend on $v_i$. Thus when we refer to conditions \raf{active1} and \raf{active2} for $\bar{v}$, we replace $v$ and $x$ by $\bar{v}$ and $\bx$ on the left side and keep the right side unchanged. 
Non-negativity of payments is immediate from the definition of  mechanism $M$ and Lemma~\ref{beta-VCG}. Moreover, the utility of a truth-telling bidder $i$ can be negative only if he/she is allocated in step 5, i.e., at most with probability $q_i$. It follows that the mechanism is individually rational with probability at least $1 - \sum_{i=1}^n q_i = q_0= (1 -\frac{\varepsilon_0}{n})^n\ge 1-\varepsilon_0$. 

Now we address truthfulness. Let us denote the expected utility  of player $i$ obtained from  the mechanism  in Algorithm~\ref{rel-TIE} on input $v\in\cV$ by $\EE[U'_i(v)]$.  Assume $j=0$ in Algorithm ~\ref{rel-TIE}. We run $\varepsilon$-approximation algorithm $\mathcal{A}$ on $v$ to compute  allocation $x=(x_1,\ldots, x_n)$. Then we change $x_i$ and $p_i$ to zero for all 
inactive $i$.  Let $\widetilde x$ be the allocation obtained in this way. The value for  player $i$ is $v_i(\widetilde x)$. When the $i$-th player is active, this value is equal to $v_i(x)$ because $v_i$ depends only on the valuation in the $i$-th group  (separability property).  Therefore in this case his utility  is $U_i(v)$. So we have that
\begin{align}\label{exp-util}
\EE[U'_i(v)]=\left\{
\begin{array}{ll}
q_0 \cdot U_i(v) + q_i (\bar{v}_i(u^i)-\eta' L_i) &\text{ if $i\in T(v)$,}\\
q_i \bar{v}_i(u^i) & \text{ if $i \not\in T(v)$}.
\end{array}
\right.
\end{align}

We first observe
\begin{equation}\label{min-util}
\EE[U'_i(\bar{v})]\ge(1-\bar{\varepsilon})q_i \cdot \bar{v}_i(u^i).
\end{equation}
Indeed, the inequality is trivially satisfied if $i\not\in T(\bar{v})$. On the other hand, if $i\in T(\bar{v})$, then \raf{active1} implies \Khaled{$U_i(\bar{v})=\widehat U_i(\bar v)\ge\frac{q_i}{q_0}\left(\eta' L_i-\bar{\varepsilon}\bar{v}_i(u^i)\right)$},  therefore 
\begin{align*}
\EE[U'_i(\bar{v})]
&=q_0 \cdot U_i(\bar{v}) + q_i (\bar{v}_i(u^i)-\eta' L_i)\\
&\ge q_0 \cdot \frac{q_i}{q_0}\left(\eta' L_i-\bar{\varepsilon}\bar{v}_i(u^i)\right)+ q_i (\bar{v}_i(u^i)-\eta' L_i) \\
&=(1-\bar{\varepsilon})q_i \cdot \bar{v}_i(u^i).
\end{align*}
We now consider four cases:

\smallskip

\noindent\emph{Case 1:} $i\in T(\bar v)\cap T(v)$. Note that  \raf{active2} for $\bar{v}$ implies $\beta_i = \varepsilon L_i  \le \frac{\varepsilon \bar{v}_i(u^i)}{\eta}$. Thus, by   Lemma~\ref{beta-VCG}, and using assumption \raf{def of ui} that $\bar{v}_i(x)\le\bar{v}_i(u^i)$, we have 
\begin{equation}\label{active2 for bv}
U_i(\bar{v})\ge U_i(v)-\varepsilon(1+\frac{3}{\eta})\bar{v}_i(u^i) \ge  U_i(v)-  \frac{4 \varepsilon }{\eta}\bar{v}_i(u^i). \end{equation}
Hence  by using \raf{exp-util}  and \raf{active2 for bv}, we have
\begin{align*}\label{ce1}
\EE[U'_i(v)]&= q_0 \cdot U_i(v) + q_i (\bar{v}_i(u^i)-\eta' L_i)\\
&\le q_0 ( U_i(\bar{v}) + \frac{4 \varepsilon}{\eta} \bar{v}_i(u^i) ) + q_i (\bar{v}_i(u^i)-\eta' L_i)\\
&= \underbrace{q_0  U_i(\bar{v})+q_i (\bar{v}_i(u^i)-\eta' L_i)}_{\EE[U'_i(\bar{v})]}+ q_0 \frac{4 \varepsilon}{\eta} \bar{v}_i(u^i)\\
&= \EE[U'_i(\bar{v})]+q_0 \frac{4 \varepsilon}{\eta}\bar{v}_i(u^i).
\end{align*}
Now applying  \raf{min-util} in the above inequality, we get
\begin{align*}
\EE[U'_i(v)] &\le\EE[U'_i(\bar{v})]+q_0 \frac{4 \varepsilon}{\eta}\bar{v}_i(u^i) \\
&\le\left(1+\frac{q_0}{(1-\bar{\varepsilon})q_i} \frac{4 \varepsilon}{\eta}\right)\EE[U'_i(\bar{v})]\\
 &\le (1+\bar{\varepsilon})\EE[U'_i(\bar{v})], 
\end{align*}
where the last inequality follows from the definition of $\varepsilon$. 
Note that (since $q_0 \le 1$ and $\bar{\varepsilon} \le 1/2$)
\[ \varepsilon \frac{q_0}{(1-\bar{\varepsilon})q_i}\frac{4}{\eta} \le  \varepsilon \frac{1}{(1-\bar{\varepsilon})q_i}\frac{4}{\eta} \le
\frac{\eta \bar{\varepsilon} (1 - q_0)}{8 n} \frac{8}{q_i \eta} = \bar{\varepsilon}. \]

\medskip

\noindent{\it Case 2:} $i\not\in T(v)$. By \raf{min-util}, we have
\[ \EE[U'_i(v)] = q_i \bar v_i(u^i)\le\frac{1}{1-\bar{\varepsilon}}\EE[U'_i(\bar{v})] \le (1 + \varepsilon_0)\EE[U'_i(\bar{v})]  . \]
Since, $\frac{1}{1 - \bar{\varepsilon}} = 1 + \bar{\varepsilon}(1 + \bar{\varepsilon} + \bar{\varepsilon}^2 + \ldots) \le 1 + 2 \bar{\varepsilon} = 1 +\varepsilon_0$.

\noindent{\it Case 3:} $i\in T(v)\setminus T(\bar v)$ and \raf{active2} does not hold for $\bar{v}$. 
Since $U_i(v) \le \bar{v}_i(u^i)$, we have 
\begin{align*}
\EE[U'_i(v)] &=q_0 \cdot U_i(v) + q_i (\bar{v}_i(u^i)-\eta' L_i)
\le (q_0 +q_i )\bar v_i(u^i)- q_i \eta' L_i
<(q_0 + q_i - 1)\bar v_i(u^i)  \\
&\le 0
\le q_i \bar v_i(u^i)
=\EE[U'_i(\bar{v})],
\end{align*}
where the second inequality holds 
because \raf{active2} does not hold for $\bar{v}$ and $q_i \eta'/\eta = 1$.
\smallskip

\noindent{\it Case 4:}  $i\in T(v)\setminus T(\bar v)$ and \raf{active2} holds for $\bar{v}$.
Then \raf{active1} does not hold for $\bar{v}$ and hence 
\begin{equation}\label{not active1 for bv} 
U_i(\bar{v}) =\widehat U_i(\bar{v}) <\frac{q_i}{q_0}\left(\eta' L_i-\bar{\varepsilon}\bar v_i(u^i)\right).
\end{equation}Since \raf{active2} holds for $\bar{v}$, we have \raf{active2 for bv}. Hence by \raf{min-util}, \raf{active2 for bv} and
\raf{not active1 for bv} we have
\begin{align*}
\EE[U'_i(v)]
&=q_0 \cdot U_i(v) + q_i (\bar{v}_i(u^i)-\eta' L_i)\\
&\le q_0 \left(U_i(\bar{v})
+  \frac{4 \varepsilon} {\eta}\bar v_i(u^i)\right) + q_i (\bar v_i(u^i)-\eta' L_i) \\
&\le q_i \eta' L_i- q_i \bar{\varepsilon}\bar v_i(u^i)
+ \frac{4 \varepsilon} {\eta} \bar{v}_i(u^i)+q_i (\bar v_i(u^i)-\eta' L_i) \\
&=(1-\bar{\varepsilon}) q_i \cdot\bar v_i(u^i)+ \frac{4 \varepsilon}{\eta} \bar{v}_i(u^i)\\
&\le\left(1 +\frac{1}{(1-\bar{\varepsilon})q_i } \frac{4 \varepsilon} {\eta}\right)\EE[U'_i(\bar{v})] \\
&\le (1 + \bar{\varepsilon}) \EE[U'_i(\bar{v})],
\end{align*}
where the last inequality follows from the definition of $\varepsilon$ (see Case 1). \smallskip

We finally argue about the approximation ratio. Note that for $i\not\in T(v)$, one of the inequalities \raf{active1} or \raf{active2} does not hold. \Khaled{Also, $U_i(v) \ge 0$ in this case since $p_i=0$,}
and hence $v_i(u^i)<\max\{\eta, \eta'/{\bar{\varepsilon}}\} L_i = \eta' L_i/\bar{\varepsilon} $. 
Since $\mathcal{A}$ returns  allocation $x$ that is $(1 - \varepsilon)$-social efficiency and\footnote{$q_0 n \frac{\eta'}{\bar{\varepsilon}} \le \frac{n \eta}{q_i \bar{\varepsilon}} = \frac{n^2 \bar{\varepsilon} (1 - q_0)^2}{(1 - q_0) \bar{\varepsilon} n^3} = \frac{1 - q_0}{n} = q_i.$} $q_i - q_0 n\frac{\eta'}{\bar{\varepsilon}}\ge 0$, it follows that for any $v\in\cV$, (recall \Khaled{$x=\mathcal{A}(v)$})
\begin{align*}
\EE[v(\Khaled{\widetilde x})]&=q_0 \sum_{i\in T(v)}v_i(x)+ \sum_{i\in[n]}q_i v_i(u^i)\\
&{= q_0\sum_{i\in[n]}v_i(x)- q_0\sum_{i\notin T(v)}v_i(x) + \sum_{i\in[n]} q_i v_i(u^i)}\\
&= q_0v(x)- q_0\sum_{i\notin T(v)}v_i(u^i)   + \sum_{i\in[n]} q_i v_i(u^i)\\
&> q_0v(x)- q_0 \frac{\eta'}{\bar{\varepsilon}}\sum_{i\not\in T(v)}L_i   + \sum_{i\in[n]} q_i v_i(u^i)\\
&= q_0v(x)- q_0 \frac{\eta'}{\bar{\varepsilon}}\sum_{i\not\in T(v)}\sum_{j\neq i}v_j(u^j)   + \sum_{i\in[n]} q_i v_i(u^i)\\
&\ge q_0v(x)- q_0 \frac{\eta'}{\bar{\varepsilon}}n\sum_{{j\in[n]}}v_j(u^j)   + \sum_{i\in[n]} q_i v_i(u^i)\\
&\ge q_0 v(x)+\sum_{i\in[n]}\left(q_i - q_0 n\frac{\eta'}{\bar{\varepsilon}}\right)v_i(u^i)\\
&\ge q_0 (1 - \varepsilon) \cdot \max_{z \in Q} v(z)\\ 
&\ge (1-\varepsilon_0) (1 - \varepsilon) \cdot \max_{z \in Q} v(z).
\end{align*}
\qed

\subsection{Approximately Truthful-in-Expectation Integral Mechanisms}\label{app:3}
In this subsection, we derive a randomized  mechanism $M'$ which returns an integral allocation. Let $\varepsilon > 0$ be arbitrary. First run Algorithm~\ref{rel-TIE} to obtain $x$ and $p(v)$. Then compute a convex decomposition of $\frac{\alpha}{1 +4 \varepsilon}x$, which is
 $ \frac{\alpha}{1 +4\varepsilon}  x = \sum_{j \in \mathcal{N}} \lambda_j^x x^j$.
Finally with probability $\lambda_j^x$ (we use the superscript $x$ to distinguish the convex decompositions of different $x$) return the allocation $x^j$ and charge the $i$-th player the price 
$p_i(v) \frac{v_i(x^j)}{v_i(x)}$, if $v_i(x) > 0$, and  zero otherwise.
We now prove Theorem \ref{conversion}.
\medskip

\noindent{\it Proof of Theorem \ref{conversion}.}
	 Let $M$ be a fractional randomized mechanism obtained in Theorem \ref{approximately truthful mechanism}.
Since $M$ has no positive transfer, $M'$ does neither. 
 $M$ is individually rational with probability $1 - \varepsilon_0$, therefore for any allocation $\bx$,  we have  $\bar{v}_i(\bx) - p_i(\bar{v}) \ge 0$ with probability $1-\varepsilon_0$. 
 So
\[ \bar{v}_i(x^l) - p_i(\bar{v}) \frac{\bar{v}_i(x^l)}{\bar{v}_i(\bx)}  = \left(\bar{v}_i(\bx) - p_i(\bar{v})\right)  \frac{\bar{v}_i(x^l)}{\bar{v}_i(\bx)} \ge 0, \]
hence $M'$ is individually rational with probability $1 - \varepsilon_0$. Now we prove truthfulness. 
Let $\mathit{\EE}[U^{''}_i(\bar{v})]$ be the expected utility of player $i$ when she inputs her true valuation and let $\mathit{\EE}[U^{''}_i(v)]$ be her expected utility when she inputs $v_i$. Then  by definition of $\mathit{\EE}[U^{''}_i(\bar{v})]$, we have
\begin{align*}
\mathit{\EE}[U^{''}_i(\bar{v})]&=
\EE_{\bx \sim M(\bar{v})}\left[\sum_{l \in \mathcal{N}} \lambda^{\bx}_l \left( \bar{v}_i(x^l) - p_i(\bar{v}) \frac{\bar{v}_i(x^l)}{\bar{v}_i(\bx)}\right)\right]\\
&=\EE_{\bx \sim M(\bar{v})}\left[\left( \bar{v}_i(\sum_{l \in \mathcal{N}} \lambda^{\bx}_l x^l) - p_i(\bar{v}) \frac{\bar{v}_i(\sum_{l \in \mathcal{N}} \lambda^{\bx}_l x^l)}{\bar{v}_i(\bx)}\right)\right]\\
 &= \EE_{\bx \sim M(\bar{v})}[ \frac{\alpha}{1 +4 \varepsilon} \bar{v}_i(\bx) - \frac{\alpha}{1 +4 \varepsilon} p_i(\bar{v}) ] \\
  &=  \frac{\alpha}{1 +4 \varepsilon}  \EE_{\bx \sim M(\bar{v})}[ \bar{v}_i(\bx) -  p_i(\bar{v}) ] \\
 &= \frac{\alpha}{1 +4 \varepsilon}  \EE[U'(\bar{v})] \\
&\ge (1 - \varepsilon_0)  \frac{\alpha}{1 +4 \varepsilon}\EE[U'(v)] \\
&= (1 - \varepsilon_0) \frac{\alpha}{1 +4 \varepsilon} \EE_{x\sim M(v)}[\bar{v}(x) - p_i(v)]\\
 &= (1 - \varepsilon_0) \EE_{x\sim M(v)}[\frac{\alpha}{1 +4 \varepsilon} \bar{v}(x) - p_i(v)\frac{\alpha}{1 +4 \varepsilon}\cdot \frac{v_i(x)}{v_i(x)} ]\\
&= (1 - \varepsilon_0) \EE\left[\sum_{l \in \mathcal{N}} \lambda^{x}_l \left( \bar{v}_i(x^l) - p_i(v) \frac{v_i(x^l)}{v_i(x)}\right)\right]\\ 
&= (1 - \varepsilon_0)  \mathit{\EE}[U^{'}_i(v)]. 
\end{align*}
Taking expectation with respect to $x$ shows that the mechanism is $\frac{\alpha(1 - \varepsilon_0)(1 - \varepsilon)}{1+4\varepsilon}$-socially efficient.
\begin{align*}
\mathit{\EE[v(x)]}&=\EE_{x \sim M(v)}\left[\sum_{l \in \mathcal{N}} \lambda^{x}_lv(x^l)\right]
=\EE_{x \sim M(v)}\left[v(\sum_{l \in \mathcal{N}} \lambda^{x}_lx^l)\right] 
=\EE_{x \sim M(v)}\left[v(\frac{\alpha}{1+4\varepsilon}x)\right]\\
&=\frac{\alpha}{1+4\varepsilon}\EE_{x \sim M(v)}[v(x)]
\ge \frac{\alpha}{1+4\varepsilon}(1 - \varepsilon_0)(1 - \varepsilon) \max_{z \in Q} v(z).\\
\end{align*} 
This completes the proof of Theorem~\ref{conversion}.
\qed

\newcommand{\htmladdnormallink}[2]{#1}

\section*{Appendix I: Khandekar's Algorithm for Covering Linear Programs}

Consider a covering linear program:
\begin{align}
\label{cover1}
\min c^{T}x \quad\text{subject to}\qquad Ax\geq b,~~ x\ge 0,
\end{align}
where $A\in\RR_{\ge 0}^{m\times n}$ is an $m\times n$ matrix with non-negative entries and $c\in\RR_{\ge 0}^n$ and  $b\in\RR_{\ge 0}^m$ are non-negative vectors.
 We assume the availability of a $\kappa$-\emph{approximation oracle} for some $\kappa \in (0,1]$. 
\begin{description}
\item[$\cO_\kappa(z)$:] Given $z\in \mathbb{R}^m_{\ge 0}$, the oracle finds a column $j$ of $A$ that maximizes $\frac{1}{c_j}\sum_{i=1}^{m}\frac{z_ia_{ij}}{b_i}$ within a factor of $\kappa$: 
\begin{equation}\label{kappa-approx} \frac{1}{c_j}\sum_{i=1}^{m}\frac{z_ia_{ij}}{b_i} \ge \kappa \cdot \max_{j'\in[n]}\frac{1}{c_{j'}}\sum_{i=1}^{m}\frac{z_ia_{ij'}}{b_i} \end{equation}
\end{description}

We use $A_i$ to denote the $i$-th row of $A$. Algorithm~\ref{Cover-LP} constructs vectors $x(t)\in\RR_{\ge 0}^n$, for $t=0,1,\ldots,$ until $M(t):=\min_{i\in[m]} A_ix(t)/b_i$ becomes at least $T:=\frac{\ln m}{\eps^2}$. Define the \emph{active list} at time $t$ by  $L(t):=\set{i\in[m]}{A_ix(t-1)/b_i <T}$. For $i\in L(t)$, define
\begin{equation}\label{p-cover}
p_{i}(t):=(1-\varepsilon)^{A_i x(t-1)/b_i},
\end{equation}
and set $p_i(t)=0$ for $i\not \in L(t)$. 
At each time $t$, the algorithm calls the oracle with the vector 
$z_t=p(t)/\onenorm{p(t)}$, and increases the variable $x_{j(t)}$ by 
\begin{equation}\label{def of delta}
\delta(t):=\min_{i\in L(t) \text{ and }a_{i,j(t)} \not= 0\ \ \ } \frac{b_i}{a_{i,j(t)}},
\end{equation}
where $j(t)$ is the index returned by the oracle. 

\begin{algorithm}[t]
\caption{Covering$(\cO_\kappa)$}
\label{Cover-LP}
\begin{algorithmic}[1]
\REQUIRE a covering system $(A,b,c)$ given by a $\kappa-$approximation oracle $\cO_\kappa$, where $A\in \mathbb{R}_{\ge 0}^{m\times n},$ $b\in \mathbb{R}_{> 0}^m$, $c\in\RR^n_{> 0}$, and an accuracy parameter $\varepsilon\in(0,1/2]$ 
\ENSURE A feasible solution $\hat x\in\RR_{\ge 0}^n$ to \raf{cover} s.t. $c^T\hat x\le\frac{(1+4\varepsilon)}{\kappa}z^*$
  \STATE $x(0):=0$; $t:=0$; and $T:=\frac{\ln m}{\varepsilon^2}$
  \WHILE{$M(t)<T$}
    \STATE $t:=t+1$
    \STATE Let $j(t):=\cO_\kappa(p(t)/\onenorm{p(t)})$  
    \STATE $x_{j(t)}(t):=x_{j(t)}(t-1)+\delta(t)$ and $x_j(t) = x_j(t-1)$ for $j \not= j(t)$ \label{sss2-cover}
  \ENDWHILE
  \RETURN $\hat x=\frac{x(t)}{M(t)}$
\end{algorithmic}
\end{algorithm}

\paragraph{\bf Proof  of Theorem \ref{t1}.} Note that the RHS of \raf{kappa-approx} is positive for our choice of $z_t$ since every row of $A$ contains a non-zero entry and hence $\sum_{i \in L(t)} p_i(t) a_{ij(t)}/(b_i c_{j(t)}) > 0$. This conclude that there exist  at least one $i\in L(t)$ which $a_{i,j(t)}$ is non zero and
  thus $\delta(t) > 0$ always. In each iteration, some entry of $x$ is increased and hence the values $A_i x(t)/b_i$ are non-decreasing. Thus $L(t+1) \subseteq L(t)$ for all $t$. At the end, we scale $x(t)$ by $M(t)$ to guarantee feasibility.


Let $\bone_j$ denote the $j$-th unit vector of dimension $n$ and $B\in\RR^{m\times m}$ be a diagonal matrix  with entries $b_{ii}=b_i$. Feasibility is obvious since we scale by $M(t)$. The bound on the number of iterations is also obvious since in each iteration at least one of the $A_ix/b_i$ increases by one and we remove $i$ from the active list once $A_ix/b_i$ reaches $T$.  We conclude that the number of iterations is bounded by $m \ceil{T}$. Let $t_0$ be the number of iterations, i.e., vectors $x(0)$, $x(1)$, \ldots, $x(t_0)$ are defined and
$M(t_0 - 1) < T \le M(t_0)$. In the $t$-th iteration exactly one entry of $x$ is increased by $\delta(t)$ and hence $\bone^T x(t_0) = \sum_{1 \le t \le t_0} \delta(t)$ and $A_i x(t)/b_i \le A_i x(t - 1)/b_i+ 1$ for $i \in L(t)$. To show \raf{bd11}, we analyze the decrease of $\|p(t)\|_1$. Let 
$t \le t_0$. Then
\begin{align}\label{bd2}
 \sum_{i\in L(t)} (1-\varepsilon)^{A_i x(t)/b_i}\nonumber
&=\sum_{i\in L(t)} (1-\varepsilon)^{A_ix(t-1)/b_i+\delta(t)A_i\bone_{j(t)}/b_i}\nonumber\\
&=\sum_{i\in L(t)} p_i(t) (1-\varepsilon)^{\delta(t)A_i\bone_{j(t)}/b_i} \nonumber\\
& \le \sum_{i\in L(t)} p_i(t) (1-\varepsilon\delta(t)A_i\bone_{j(t)}/b_i)\nonumber\\
& \text{(using \raf{def of delta}  conclude that  $\delta(t)A_i\bone_{j(t)}/b_i\le1$ and }\nonumber\\
& \text{$(1-\varepsilon)^x\leq 1-\varepsilon x$ for all $\varepsilon\in [0,1),~x\in [0,1]$)}\nonumber\\
&=\|p(t)\|_1\left(1-\frac{\varepsilon\delta(t)p(t)^T B^{-1}A\bone_{j(t)}}{\|p(t)\|_1}\right)\nonumber\\
&\le \|p(t)\|_1 e^{-\varepsilon\delta(t)\frac{p(t)^T}{\|p(t)\|_1} B^{-1}A\bone_{j(t)}} \quad \text{since $1-x\leq e^{-x}$.}
\end{align}
By using $L(t+1)\subseteq L(t)$, we have
\begin{align}\label{bd23}
\| p(t+1) \|_1 \nonumber
&= \sum_{i \in L_{(t+1)}} (1 - \varepsilon)^{A_i x(t)/b_i}\nonumber\\
& \le \sum_{i \in L_{(t)}} (1 - \varepsilon)^{A_i x(t)/b_i}
\end{align}
and hence applying inequalities \raf{bd2} and \raf{bd23} we get,
\begin{equation}\label{bd2a}
\|p(t+1)\|_1 \le \sum_{i \in L(t)} (1 - \varepsilon)^{A_i x(t)/b_i} \le \|p(t)\|_1 e^{-\varepsilon\delta(t)\frac{p(t)^T}{\|p(t)\|_1} B^{-1}A\bone_{j(t)}}.  \end{equation}
Let $i_0 \in L(t_0)$ be arbitrary. Then 
\begin{align*}
(1 - \varepsilon)^{A_{i_0}x(t_0)/b_{i_0}} &\le \sum_{i\in L(t_0)} (1-\varepsilon)^{A_i x(t_0)/b_{i_0}}\\
&\le \|p(t_0)\|_1 e^{-\varepsilon\delta(t_0)\frac{p(t_0)^T}{\|p(t_0)\|_1} B^{-1}A\bone_{j(t_0)}}\\
&\le \|p(0)\|_1e^{-\varepsilon\sum_{1 \le t \le t_0} \delta(t)\frac{p(t)^T}{\|p(t)\|_1} B^{-1}A\bone_{j(t)}},
\end{align*}
where the second inequality uses (\ref{bd2}) for $t = t_0$ and the third inequality uses (\ref{bd2a}) for $0 \le t < t_0$.
Taking logs and using $\|p(0)\|_1=m$, we conclude that
\begin{equation}\label{e11}
A_{i_0}x(t_0)/b_{i_0}\cdot\ln(1-\varepsilon)\le \ln m-\varepsilon\sum_{1 \le t \le t_0}\delta(t)\frac{p(t)^T}{\|p(t)\|_1} B^{-1}A\bone_{j(t)} 
\end{equation}
We next relate the  objective value $c^Tx(t_0)=\sum_{1 \le t \le t_0}c_{j(t)} \delta(t)$ at time $t_0$ to the optimal value $z^*$ by the following claim.
\begin{claim}\label{cl11}
$\sum_{1 \le t \le t_0}\delta(t)\frac{p(t)^T}{\|p(t)\|_1} B^{-1}A\bone_{j(t)}\ge\frac{\kappa\cdot c^Tx(t_0)}{z^*}.$
\end{claim}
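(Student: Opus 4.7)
The plan is to prove the claim directly by combining the oracle's approximation guarantee with the fact that any optimal LP solution $x^*$ satisfies $Ax^* \ge b$. First I would fix an optimum $x^*$ with $c^T x^* = z^*$ and $Ax^* \ge b$, which is equivalent to $B^{-1}Ax^* \ge \bone$. Since $p(t) \ge 0$, applying $p(t)^T$ to both sides yields the key inequality
\[
  \|p(t)\|_1 \;=\; p(t)^T\bone \;\le\; p(t)^T B^{-1} A x^* \;=\; \sum_{j'} x^*_{j'}\, p(t)^T B^{-1}A\bone_{j'}.
\]
This converts the normalizer $\|p(t)\|_1$ into something that can be compared to the quantities produced by the oracle.

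Next I would invoke the $\kappa$-approximation guarantee of $\mathcal{O}_\kappa$ at the weight vector $z_t = p(t)/\|p(t)\|_1$. By definition, $j(t)$ maximizes $\frac{1}{c_{j'}}p(t)^T B^{-1}A\bone_{j'}$ up to a factor $\kappa$, so for every column $j'$,
\[
  p(t)^T B^{-1}A\bone_{j'} \;\le\; \frac{c_{j'}}{\kappa\, c_{j(t)}}\, p(t)^T B^{-1} A\bone_{j(t)}.
\]
Substituting this into the previous bound and factoring out the $j(t)$ term gives
\[
  \|p(t)\|_1 \;\le\; \frac{p(t)^T B^{-1} A\bone_{j(t)}}{\kappa\, c_{j(t)}} \sum_{j'} c_{j'} x^*_{j'} \;=\; \frac{z^*}{\kappa\, c_{j(t)}}\, p(t)^T B^{-1} A\bone_{j(t)},
\]
which rearranges to the per-iteration lower bound
\[
  \frac{p(t)^T B^{-1} A\bone_{j(t)}}{\|p(t)\|_1} \;\ge\; \frac{\kappa\, c_{j(t)}}{z^*}.
\]

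Finally I would multiply this inequality by $\delta(t)$ and sum over $t = 1,\ldots, t_0$. Since only coordinate $j(t)$ of $x$ is updated at step $t$ (by $\delta(t)$), the objective value telescopes as $c^T x(t_0) = \sum_{t=1}^{t_0} c_{j(t)}\,\delta(t)$, which yields
\[
  \sum_{t=1}^{t_0} \delta(t)\, \frac{p(t)^T B^{-1} A\bone_{j(t)}}{\|p(t)\|_1} \;\ge\; \frac{\kappa}{z^*}\sum_{t=1}^{t_0} c_{j(t)}\,\delta(t) \;=\; \frac{\kappa\, c^T x(t_0)}{z^*},
\]
which is exactly the claim. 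No step is really an obstacle here; the only mildly subtle point is recognizing that one must compare the oracle's chosen column against an LP-feasible $x^*$ (rather than a single column), which is what the bound $B^{-1}Ax^* \ge \bone$ enables.
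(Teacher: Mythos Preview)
Your proof is correct and follows essentially the same approach as the paper: both start from $B^{-1}Ax^*\ge\bone$ and nonnegativity of $p(t)$ to obtain $\|p(t)\|_1\le p(t)^TB^{-1}Ax^*$, then use the $\kappa$-oracle guarantee to bound each column term by the chosen column $j(t)$, arriving at $z^*\,p(t)^TB^{-1}A\bone_{j(t)}\ge \kappa\,c_{j(t)}\|p(t)\|_1$, and finally multiply by $\delta(t)/\|p(t)\|_1$ and sum, using $c^Tx(t_0)=\sum_t c_{j(t)}\delta(t)$. The only cosmetic difference is that the paper writes the intermediate step as a chain starting from $z^*\,p(t)^TB^{-1}A\bone_{j(t)}$, whereas you first isolate the per-iteration inequality and then sum.
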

\begin{proof}
Let $x^*\in\RR_{\ge 0}^n$ be an optimal solution to \raf{cover}. Since $x^*$ is feasible,  $B^{-1}Ax^*\ge \bone$, and thus for any $t$,  
$$ p(t)^T B^{-1}Ax^*\ge p(t)^T \bone = \|p(t)\|_1.$$
By the choice of the index $j(t)$, we have that $ \frac{1}{c_{j(t)}}p(t)^TB^{-1}A\bone_{j(t)}\ge  \frac{1}{c_j}\kappa p(t)^T B^{-1}A\bone_j$ for all $j\in[n]$. Since $z^* = c^T x^*$, we conclude further
\begin{align*}
z^* p(t)^T B^{-1}A\bone_{j(t)}&=\sum_{j\in[n]}c_jx^*_j p(t)^T B^{-1}A\bone_{j(t)}\\
&=\sum_{j\in[n]}c_j\cdot\frac{c_{j(t)}}{c_{j(t)}}x^*_j p(t )^T B^{-1}A\bone_{j(t)}\\
&\ge\sum_{j\in[n]}c_j\cdot\frac{c_{j(t)}}{c_j}x^*_j\kappa p(t )^T B^{-1}A\bone_{j}\\
&=\kappa c_{j(t)}p(t)^T B^{-1}Ax^*\\
&\ge \kappa c_{j(t)} \| p(t) \|_1.
\end{align*}
Multiplying both sides of this inequality by $\delta(t)/\onenorm{p(t)}$ and summing up over $1\le t\le t_0$ finishes the proof of the claim.
\qed
\end{proof}

\noindent
Using  the claim above, we deduce from \raf{e11} 
\[ A_{i_0} x(t_0)/b_{i_0}\cdot\ln(1-\varepsilon)\le \ln m-\varepsilon\cdot\frac{\kappa\cdot c^Tx(t_0)}{z^*} \]
Dividing both sides by $M(t_0)$, arranging, and using $M(t_0)\ge T = (\ln m)/\varepsilon^2$, $A_{i_0} x(t_0)/b_{i_0} \le A_{i_0} x(t_0 - 1)/b_{i_0} + 1 \le T + 1$,  and $\frac{\ln\frac{1}{1-\varepsilon}}{\varepsilon}\leq 1+2\varepsilon$, valid for all $\varepsilon\in (0,\frac{1}{2}]$, we obtain
\begin{align*}
\frac{\kappa\cdot c^T\hat x}{z^*} =  \frac{\kappa\cdot c^Tx(t_0)}{M(t_0)z^*} 
&\le \frac{\ln\frac{1}{1-\varepsilon}}{\varepsilon}\cdot\frac{A_{i_0}x(t_0)/b_{i_0}}{M(t_0)}+\frac{\ln m}{\varepsilon\cdot M(t_0)}\\
&\le  (1 + 2 \varepsilon) \frac{T + 1}{T}   + \varepsilon \le 1 + 4\varepsilon.
\end{align*}
\qed

\paragraph{\bf Proof  of  Corollary~\ref{c1}.}
Recall~\raf{e11}:
\[
A_{i_0}x(t_0)/b_{i_0}\cdot\ln(1-\varepsilon)\le \ln m-\varepsilon\sum_{1 \le t \le t_0}\delta(t)\frac{p(t)^T}{\|p(t)\|_1} B^{-1}A\bone_{j(t)} .\]
With assumption $b=\bone$, we have,
\[
A_{i_0}x(t_0)\cdot\ln(1-\varepsilon)\le \ln m-\varepsilon\sum_{1 \le t \le t_0}\delta(t)\frac{p(t)^T}{\|p(t)\|_1} A\bone_{j(t)} .\]

The vector $z_t = p(t)/\onenorm{p(t)}$ satisfies $\bone^T z_t = 1$. Apply oracle $\cO'$ with input vector $z_t$, it returns index $j(t)$  such that we have $\frac{p(t)^T}{\|p(t)\|_1} A\bone_{j(t)} \ge 1$. Thus, we have 
\[ A_{i_0} x(t_0)\cdot\ln(1-\varepsilon)\le \ln m-\varepsilon\cdot\bone^Tx(t_0).\]
Proceeding as in the proof of Theorem~\ref{t1}, we get the result. 
\qed

\end{document}